\documentclass[10pt]{amsart}
\usepackage{color}
\usepackage{pgf}
\usepackage{mathtools}
\usepackage{enumerate}
\usepackage{tikz}

\usetikzlibrary{patterns}
\usetikzlibrary{arrows,shapes,positioning}
\usetikzlibrary{decorations.markings}
\usetikzlibrary[matrix] 

\tikzstyle directed=[postaction={decorate,decoration={markings,
    mark=at position .65 with {\arrow[arrowstyle]{latex}}}}]

\tikzstyle arrowstyle=[scale=1]

\usepackage[left=3.cm, right=3.cm, top=3.5cm, bottom=3.5cm]{geometry}
\usepackage{amssymb}
\usepackage{cite}

\newtheorem{thm}{Theorem}[section]
\newtheorem{cor}[thm]{Corollary}

\newtheorem{prop}[thm]{Proposition}

\theoremstyle{definition}

\newtheorem{example}[thm]{Example}
\newtheorem{definition}[thm]{Definition}
\newtheorem{remark}[thm]{Remark}
\numberwithin{equation}{section}
\numberwithin{figure}{section}

\newcommand{\Wr}{\operatorname{Wr}}
\newcommand{\N}{\mathbb{N}}

\newcommand{\Z}{\mathbb{Z}}
\newcommand{\C}{\mathbb{C}}
\newcommand{\R}{\mathbb{R}}
\newcommand{\bbeta}{{\boldsymbol{\beta}}}
\newcommand{\bmu}{{\boldsymbol{\mu}}}
\newcommand{\bpi}{{\boldsymbol{\pi}}}

\newcommand{\supth}{{}^{\rm{th}}}

\renewcommand{\H}{H}
\newcommand{\cL}{L}
\newcommand{\cM}{\mathcal{M}}
\newcommand{\cZ}{\mathcal{Z}}

\newcommand{\ddz}{\frac{{\rm d}}{{\rm d}z}}

\renewcommand{\th}{\widetilde{H}}

\newcommand{\h}{H}
\newcommand{\Pfour}{{\rm P_{IV}}}
\newcommand{\Pfive}{{\rm P_{V}}}
\newcommand{\PI}{$\mathrm{P}_{\mathrm{I}}$}
\renewcommand{\PII}{$\mathrm{P}_{\mathrm{II}}$}
\newcommand{\PIII}{$\mathrm{P}_{\mathrm{III}}$}
\newcommand{\PIV}{$\mathrm{P}_{\mathrm{IV}}$}
\newcommand{\PV}{$\mathrm{P}_{\mathrm{V}}$}

\renewcommand{\boxdot}{{\ \clap{\raise0.25ex\hbox{$\bullet$}}\clap{$\square$}\ }}
\newcommand{\emptybox}{\hbox{$\square$}}

\newcommand{\comment}[1]{}
\newcommand{\p}{Painlev\'{e}}
\newcommand{\ep}{\varepsilon}
\newcommand{\bk}{B\"acklund}

\def\d{{\rm d}}
\newcommand{\deriv}[3][]{\frac{\d^{#1}{#2}}{{\d{#3}}^{#1}}}
\renewcommand{\a}{\alpha}
\newcommand{\ex}[1]{\exp(#1)}

\newcommand{\cc}[1]{c_{#1}}

\newcommand{\VY}{Yablonskii--Vorob'ev polynomials}

\newcommand{\odes}{ordinary differential equations}

\def\Anfig#1#2{\includegraphics[width=#1in]{figures/#2}}

\begin{document}

\title[Cyclic Maya diagrams]{Cyclic Maya diagrams and rational solutions of higher order Painlev\'{e} systems}

\author{Peter A. Clarkson}
\address{School of Mathematics, Statistics and Actuarial Science,
University of Kent, CT2 7NF, UK.}

\author{David G\'omez-Ullate}
\address{Escuela Superior de Ingenier\'ia, Universidad de C\'adiz, 11519 Puerto Real, Spain.}
\address{Departamento de F\'isica Te\'orica, Universidad Complutense de
  Madrid, 28040 Madrid, Spain.}

\author{Yves Grandati}
\address{ Laboratoire de Physique et Chimie Th\'eoriques, Universit\'e de Lorraine, 57078 Metz, Cedex 3, France.}
\author{Robert Milson}
\address{Department of Mathematics and Statistics, Dalhousie University,
  Halifax, NS, B3H 3J5, Canada.}
\email{P.A.Clarkson@kent.ac.uk, david.gomezullate@uca.es,yves.grandati@univ-lorraine.fr,   rmilson@dal.ca}
\begin{abstract}
This paper focuses on the construction of rational solutions for the $A_{2n}$-\p\ system, also called the Noumi-Yamada system, which are considered the higher order generalizations of $\Pfour$. In this even case, we introduce a method to construct the rational solutions based on cyclic dressing chains of Schr\"{o}dinger operators with potentials in the class of rational extensions of the harmonic oscillator. Each potential in the chain can be indexed by a single Maya diagram and expressed in terms of a Wronskian determinant whose entries are Hermite polynomials. We introduce the notion of cyclic Maya diagrams and we characterize them for any possible period, using the concepts of genus and interlacing. The resulting classes of solutions can be expressed in terms of special polynomials that generalize the families of generalized Hermite, generalized Okamoto and Umemura polynomials, showing that they are particular cases of a larger family.

\bigskip
\noindent \textbf{Keywords.} Painlev\'e equations, Noumi-Yamada systems, rational solutions, Darboux dressing chains, Maya diagrams, Wronskian determinants, Hermite polynomials.
\end{abstract}

\maketitle
\section{Introduction}

The set of six nonlinear second order \p\ equations ${\rm P_I,\dots,P_{VI}}$ 
\comment{\begin{align}
\label{eq:P1}
\deriv[2]{u}{z}&=6u^2+z,\\
\label{eq:P2}
\deriv[2]{u}{z}&=2u^3+zu+\a,\\
\label{eq:P3}
\deriv[2]{u}{z}&=\frac{1}{u}\left(\deriv{u}{z}\right)^2-\frac{1}{z}\deriv{u}{z}
+\frac{\a u^2+\beta}{z}+\gamma u^3+\frac{\delta}{u},\\
\label{eq:P4}
\deriv[2]{u}{z}&=\frac{1}{2u}\left(\deriv{u}{z}\right)^2
+\frac32u^3+4zu^2+2(z^2-\a)u+\frac{\beta}{u},\\
\label{eq:P5}
\deriv[2]{u}{z}&=\left(\frac{1}{2u}+\frac{1}{u-1}\right)\left(\deriv{u}{z}\right)^2-\frac{1}{z}\deriv{u}{z}
+ \frac{(u-1)^2}{z^2}\left(\a u+\frac{\beta}{u}\right) +
\frac{\gamma u}{z}+\frac{\delta u(u+1)}{u-1},\\
\label{eq:P6}
\deriv[2]{u}{z}&= \frac{1}{2}\left(\frac{1}{u} +
\frac{1}{u-1} + \frac{1}{u-z}\right)\!\left(\deriv{u}{z}\right)^{2} - 
\left(\frac{1}{z} + \frac{1}{z-1} + \frac{1}{u-z}\right)\! \deriv{u}{z}\\
&\qquad\qquad+\frac{u(u-1)(u-z)}{z^2(z-1)^2}\left\{\a +
\frac{\beta z}{u^2} + \frac{\gamma (z-1)}{(u-1)^2} + 
\frac{\delta z(z-1)}{(u-z)^2}\right\},\nonumber
\end{align}
with $\a$, $\beta$, $\gamma$ and $\delta$ constants,}%
have been the focus of intense study from
many different angles in the past century \cite{refPAC06review,gromak2008painleve}.
Their defining property is that their solutions have no movable branch points, i.e.\ the locations of multi-valued singularities of any of the solutions are independent of the particular solution chosen and so are dependent only on the equation.
The \p\ equations, whose solutions are called \p\ transcendents, are now considered to be the nonlinear analogues of special functions, cf.~\cite{clarkson2003painleve,refPAC06review,refFIKNbook}. 
Although, in general, these functions are transcendental in the sense that they cannot be expressed in terms of previously known functions, the \p\ equations, except the first, also possess special families of solutions that can be expressed via rational functions, algebraic functions or the classical special functions, such as Airy, Bessel, parabolic cylinder, Whittaker or hypergeometric functions, for special values of the parameters, see, for example, \cite{refPAC06review,gromak2008painleve} and the references therein.

Rational solutions of the second \p\ equation (\PII) 
\begin{equation}\label{eq:P2}
u''=2u^3+zu+\a,
\end{equation}
with $\a$ an arbitrary constant and $'\equiv \d/\d z$,
were studied by Yablonskii \cite{yablonskii1959rational} and Vorob'ev \cite{vorob1965rational},
in terms of a special class of polynomials, now known as the \textit{\VY} $Q_n(z)$, which are polynomials of degree $\tfrac12n(n+1)$. 
Clarkson and Mansfield \cite{refCM03} investigated the locations of the roots of the \VY\ in the complex plane and observed that these roots have a very regular, approximately triangular structure; the term ``approximate" is used since the patterns are not exact triangles as the roots lie on arcs rather than straight lines.
Bertola and Bothner \cite{refBB15} and Buckingham and Miller \cite{refBM14,refBM15} studied the \VY\ $Q_n(z)$ in the limit as $n\to\infty$ and showed that the roots lie in a triangular region with elliptic sides which meet with interior angle $\tfrac25\pi$. 
Further Buckingham and Miller \cite{refBM14,refBM15} show that in the limit as $n\to\infty$, the rational solution of \PII\ tends to the \textit{tritronqu\'ee solution} of the first \p\ equation (\PI) 
\begin{equation}\label{eq:P1}
u''=6u^2+z.
\end{equation}

Okamoto \cite{okamoto1987studies3} obtained special polynomials associated with some of the rational solutions of the fourth \p\ equation (\PIV) 
\begin{equation}\label{eq:P4}
u''=\frac{(u')^2}{2u} 
+\frac32u^3+4zu^2+2(z^2-\a)u+\frac{\beta}{u},
\end{equation}
with $\a$ and $\beta$ constants, which are analogous to the \VY.
Noumi and Yamada \cite{noumi1999symmetries} generalized Okamoto's results and expressed all rational solutions of \PIV\ in terms of two types of special polynomials, now known as the \textit{generalized Hermite polynomials} $H_{m,n}(z)$ and \textit{generalized Okamoto polynomials} $Q_{m,n}(z)$, both of which are determinants of sequences of Hermite polynomials. The structure of the roots of these polynomials is studied in \cite{clarkson2003fourth}, where it is shown that the roots of the generalized Hermite polynomials have an approximate rectangular structure and the roots of the generalized Okamoto polynomials have a combination of approximate rectangular and triangular structures. 
Recent studies on the asymptotic distribution of zeros of certain generalised Hermite polynomials $H_{m,n}(z)$ as $m,n\to\infty$ are given in
\cite{refBuckPIV,refMR18,masoero2019roots} and of the Okamoto polynomials in \cite{NovokShchel14}.
Buckingham \cite{refBuckPIV} also obtained an explicit characterization of the boundary curve in the case of generalized Hermite polynomials.

Umemura \cite{refUmemura01} derived special polynomials associated with rational solutions of the third \p\ equation (\PIII) and the fifth \p\ equation (\PV) 
\comment{\begin{align}\label{eq:P3}
u''&=\frac{(u')^2}{u} - \frac{u'}{z}
+\frac{\a u^2+\beta}{z}+\gamma u^3+\frac{\delta}{u},\\
\label{eq:P5}
u''&=\left(\frac{1}{2u}+\frac{1}{u-1}\right)(u')^2 - \frac{u'}{z} + \frac{(u-1)^2}{z^2}\left(\a u+\frac{\beta}{u}\right) +
\frac{\gamma u}{z}+\frac{\delta u(u+1)}{u-1},
\end{align}
with $\a$, $\beta$, $\gamma$ and $\delta$ constants,}%
which are determinants of sequences of associated Laguerre polynomials. 

The special polynomials associated with rational solutions of the \p\ equations arise in several applications:
\begin{enumerate}[(i)]
\item the \VY\ arise in the transition behaviour for the semi-classical sine-Gordon equation \cite{refBM12},
in boundary value problems \cite{refBNRS},
in moving boundary problems \cite{refRogers15,refRogers16,refRogers17},
and in symmetry reductions of a Korteweg capillarity system \cite{refRC17} and cold plasma physics \cite{refRC18};
\item the generalized Hermite polynomials arise as multiple integrals in random matrix theory \cite{refFW01},
in supersymmetric quantum mechanics \cite{bermudez2012complex,bermudez2012complexb,marquette2016,Novokshenov18},
in the description of vortex dynamics with quadrupole background flow \cite{Clarkson2009vortices},
and as coefficients of recurrence relations for orthogonal polynomials \cite{chen2006painleve,clarkson2014relationship,refWVAbook};
\item the generalized Okamoto polynomials arise in supersymmetric quantum mechanics \cite{marquette2016} and 
generate rational-oscillatory solutions of the de-focusing nonlinear Schr{\"o}dinger equation \cite{clarkson2006special};
\item the Umemura polynomials arise as multivortex solutions of the complex sine-Gordon equation \cite{refBP98,refOB05}, and in 
Multiple-Input-Multiple-Output wireless communication systems \cite{refCCBC}.
\end{enumerate}
%

A very successful approach in the study of rational solutions to \p\ equations has been through the geometric methods developed by the Japanese school, most notably by Noumi and Yamada \cite{noumi2004painleve}. The core idea is to write the scalar equations as a system of  first order nonlinear equations. For instance, Noumi and Yamada \cite{noumi2004painleve} showed that \PIV\ \eqref{eq:P4}
is equivalent to the following system of three first order equations
\begin{align}\label{eq:P4system}
f_0' + f_0(f_1-f_2) &= \a_0, \nonumber\\
f_1' + f_1(f_2-f_0) &= \a_1,\\
f_2' + f_2(f_0-f_1) &= \a_2, \nonumber 
\end{align}
with $'\equiv \d/\d z$ and $\a_j$, $j=0,1,2$ constants,
subject to the normalization conditions
\begin{equation}\label{eq:P4normalization}
f_0+f_1+f_2=z,\qquad \a_0+\a_1+\a_2=1.
\end{equation}
Once this equivalence is shown, it is clear that the symmetric form of
$\Pfour$ \eqref{eq:P4system}, sometimes referred to as $\rm{s}\Pfour$,
is easier to analyse. In particular, Noumi and Yamada
\cite{noumi1999symmetries} showed that the system \eqref{eq:P4system}
possesses a symmetry group of \bk\ transformations acting on the
tuple of solutions and parameters
$(f_0,f_1,f_2 |\a_0,\a_1,\a_2)$. This symmetry group is
the affine Weyl group $A_2^{(1)}$, generated by the operators
$\{ \boldsymbol{\pi},\textbf{s}_0, \textbf{s}_1, \textbf{s}_2\}$ whose
action on the tuple $(f_0,f_1,f_2 |\a_0,\a_1,\a_2)$ is
given by:
\begin{align}\label{eq:BT}
&\textbf{s}_k (f_j)=f_j-\frac{\a_k\delta_{k+1,j}}{f_k}+\frac{\a_k\delta_{k-1,j}}{f_k}, \nonumber\\
&\textbf{s}_k(\a_j)=\a_j-2\a_j\delta_{k,j}+\a_k(\delta_{k+1,j}+\delta_{k-1,j}),\\
&\boldsymbol{\pi}(f_j)=f_{j+1},\qquad  \boldsymbol{\pi}(\a_j)=\a_{j+1} \nonumber
\end{align}
where $\delta_{k,j}$ is the Kronecker delta and $j,k=0,1,2 \mod(3)$.
The technique to generate rational solutions is to first identify a
number of very simple rational \textit{seed solutions}, and then
successively apply the \bk\ transformations \eqref{eq:BT} to
generate families of rational solutions.

This is a beautiful approach which makes use of the hidden group theoretic structure of transformations of the equations,
but the solutions built by dressing seed solutions are not very explicit,
in the sense that one needs to iterate a number of \bk\ transformations \eqref{eq:BT} on the functions and parameters in order to obtain the desired solutions. Questions such as determining the number of zeros or poles of a given solution constructed in this manner seem very difficult to address. For this reason, alternative representations of the rational solutions have also been investigated, most notably the determinantal representations \cite{kajiwara1996determinant,kajiwara1998determinant}, and representations in terms of Schur polynomials,\cite{noumi1999symmetries} and universal characters \cite{tsuda2005universal}.

The system of first order equations \eqref{eq:P4system} admits a
natural generalization to any number of equations, and it is known as
the $A_{N}$-\p\ or the Noumi-Yamada system. The $A_{2n}$-\p\ system  is considerably simpler (for reasons that will be
explained later), and it is the one we will focus on this paper. In
this case, the system has the form:
\begin{equation}\label{eq:Ansystem}
  f_i'+f_i \left( \sum_{j=1}^n f_{i+2j-1} - \sum_{j=1}^n f_{i+2j}
  \right)=\a_i,\qquad i=0,\dots,2n \mod (2n+1)
\end{equation}
subject to the normalization conditions
\begin{equation}
  \label{eq:alpha1}
f_0+\dots+f_{2n}= z,\qquad \a_0+\cdots + \a_{2n}=1.
\end{equation}

The symmetry group of this higher order system is the affine Weyl group $A_{2n}^{(1)}$, acting by \bk\ transformations as in \eqref{eq:BT}. The system passes the  Painlev\'e-Kowalevskaya test, \cite{veselov1993dressing}, and it is believed to possess the \p\ property.  It is thus considered a proper higher order generalization of $\rm{s}\Pfour$ \eqref{eq:P4system}, which corresponds to the special case $n=1$.

The next higher order system belonging to this hierarchy is the $A_4$-\p\ system, that has been studied by Filipuk and Clarkson  \cite{filipuk2008symmetric}, who provide several classes of rational solutions via an explicit Wronskian representation, and by Matsuda \cite{matsuda2012rational}, who uses the classical approach to identify the set of parameters that lead to rational solutions. However, a complete classification and explicit description of the rational solutions of $A_{2n}$-\p\ for $n\geq2$ is, to the best of our knowledge, still not available in the literature.

Of particular interest are the special polynomials associated with these rational solutions, whose zeros and poles structure shows extremely regular patterns in the complex plane, and have received a considerable amount of study, as mentioned above. 
We will show that all these special  polynomial  are only particular cases of a larger one.

Our approach for describing rational solutions to the Noumi-Yamada system makes no use of symmetry groups of \bk\ transformations, vertex operators, Hirota bilinear equations, etc. Instead, we will adopt the approach of Darboux dressing chains introduced by the Russian school \cite{adler1994nonlinear,veselov1993dressing}, which has received comparatively less attention in connection to \p\ systems, and the recent advances in the theory of exceptional polynomials \cite{gomez2009extended,gomez2010extension,gomez2016durfee}. 
Having said this, it would be a very interesting development to establish a dictionary between the symmetry group approach and the one presented in this paper.

The paper is organized as follows: in Section~\ref{sec:dressing} we introduce the equations for a dressing chain of Darboux transformations of Schr\"{o}dinger operators and prove that they are equivalent to the $A_{2n}$-\p\ system. These results are well known \cite{adler1994nonlinear} but recalling them is useful to fix notation and make the paper self contained.
In Section~\ref{sec:Maya} we explore the class of dressing chains built on rational extensions of the harmonic oscillator, which can be indexed by Maya diagrams. We introduce the key notion of \textit{cyclic Maya diagrams} and reformulate the problem of classifying rational solutions of the $A_{2n}$-\p\ system as that of classifying $(2n+1)$-cyclic Maya diagrams. In Section~\ref{sec:Mayacycles} we introduce the notion of genus and interlacing for Maya diagrams which allows us to achieve a complete classification of $p$-cyclic Maya diagrams for any period $p$. In Section\ref{sec:A4} , we focus on the $A_4$-\p\ system to give the class of rational solutions using the representation developed in the previous sections. Finally, we show some plots of the roots of these special solutions in the complex plane.

The purpose of this paper is to illustrate a new construction method and an explicit representation of rational solutions to higher order \p\ systems. We conjecture that this construction includes all possible rational solutions to the system. A proof of this fact requires new arguments than the ones developed in this paper, and remains for now an open question.


Even cyclic dressing chains provide higher order extensions of $\Pfive$. The situation for this even cyclic case corresponding to the $A_{2n+1}$-\p\ systems is considerably harder. Construction methods similar to the ones described here are available, but the class of dressing chains is larger, and it includes rational extensions of both the harmonic and the isotonic oscillator \cite{grandati2011solvable,grandati2012multistep}, thus described by universal characters (or pairs of Maya diagrams). Some rational solutions have been given by Tsuda \cite{tsuda2005universal} but the full classification for this case is still an open question. 


\section{Darboux dressing chains}\label{sec:dressing}

The theory of dressing chains, or sequences of Schr\"{o}dinger operators
connected by Darboux transformations was developed by Adler
\cite{adler1994nonlinear}, and Veselov and Shabat
\cite{veselov1993dressing}. The connection between dressing chains and
\p\ equations was already shown in \cite{adler1994nonlinear}
and it has been exploited by some authors 
\cite{takasaki2003spectral,tsuda2005universal,bermudez2012complexb,marquette2013one,marquette2013two,marquette2016,
sen2005darboux,WilloxHietarinta,MateoNegro}. 
This section follows mostly the early works of Adler, Veselov and Shabat.

Consider the following sequence of Schr\"{o}dinger operators
\begin{equation}\label{eq:Lseq}
  \cL_i = -D_z^2 + U_i ,\qquad D_z= \frac{{\rm d}}{{\rm d}z},\quad U_i=U_i(z),\quad
  i\in \Z
\end{equation}
where each operator is related to the next by a Darboux transformation, i.e. by the following factorization
\begin{equation}
  \label{eq:Dxform}
  \begin{aligned}
    \cL_i &= (D_z + w_i)(-D_z + w_i)+\lambda_i, \quad w_i = w_i(z),\\
    \cL_{i+1} &= (-D_z + w_i)(D_z + w_i)+\lambda_i.
  \end{aligned}
\end{equation}
It follows that the functions $w_i$ satisfy the Riccati equations
\begin{equation}\label{eq:Riccati}
 w_i' + w_i^2  = U_i - \lambda_i,\quad -w_i' +w_i^2 = U_{i+1}- \lambda_i.
 \end{equation}
Equivalently, $w_i$ are the log-derivatives of $\psi_i$, the seed function of the Darboux transformation that maps $\cL_i$ to $\cL_{i+1}$
\begin{equation}
  \label{eq:Lipsii}
  \cL_i\psi_i = \lambda_i\psi_i,\qquad\text{where } w_i = \frac{\psi_i'}{\psi_i}.
\end{equation}
Using \eqref{eq:Lseq} and \eqref{eq:Dxform}, the potentials of the dressing chain are related by
\begin{align}
  U_{i+1} &= U_i - 2 w'_i, \label{eq:Uplus1}\\
  U_{i+n} &=U_i - 2 \left( w'_i+ \cdots + w'_{i+n-1}\right),\quad
  n\geq 2 \label{eq:Uplusn}
\end{align}
If we eliminate the potentials in \eqref{eq:Riccati} and set
\begin{equation}
  \label{eq:alphaidef}
  a_i =  \lambda_{i} - \lambda_{i+1}
\end{equation}
the following chain of coupled equations is obtained
\[
(w_i + w_{i+1})' + w_{i+1}^2 - w_i^2 = a_i ,\quad i\in \Z
\]
Before continuing, note that this infinite chain of equations has the
evident reversal symmetry
\begin{equation}
  \label{eq:reversal1}
  w_i \mapsto -w_{-i},\qquad a_i \mapsto -a_{-i}.
\end{equation}

This infinite chain of equations closes and becomes a finite
dimensional system of \odes\ if a cyclic condition is imposed on the
potentials of the chain
\begin{equation}\label{eq:shift}
  U_{i+p} = U_i+\Delta,\quad i\in \Z
 \end{equation}
for some $p\in \N$ and $\Delta \in \C$.  If this holds, then
necessarily $w_{i+p}=w_i$, $a_{i+p}=a_i$, and
\begin{equation}
  \label{eq:Deltasumalpha}
 \Delta= -(a_0 + \cdots + a_{p-1}). 
\end{equation}

\begin{definition}\label{def:wchain}
  A $p$-cyclic Darboux dressing chain (or factorization chain) with
  shift $\Delta$ is a sequence of $p$ functions $w_0,\ldots, w_{p-1}$
  and complex numbers $a_0,\ldots, a_{p-1}$ that satisfy
  the following coupled system of $p$ Riccati-like \odes\
\begin{equation}
  \label{eq:wfchain}
  (w_i + w_{i+1})' + w_{i+1}^2 - w_i^2 = a_i ,\qquad
  i=0,1,\ldots, p-1 \mod(p)  
\end{equation}
subject to the condition \eqref{eq:Deltasumalpha}.
\end{definition}
Note that
transformation 
\begin{equation}
  \label{eq:reversal2}
  w_i \mapsto -w_{-i},\quad a_i \mapsto -a_{-i},\quad
  \Delta\mapsto -\Delta
\end{equation}
projects the reversal symmetry to the finite-dimensional system
\eqref{eq:wfchain}.  Moreover, for $j=0,1\ldots, p-1$ we also have the
cyclic symmetry
\[ w_i \mapsto w_{i + j},\quad a_i \mapsto a_{i+j},\quad
\Delta \mapsto \Delta \qquad i=0,\ldots p-1 \mod(p) \]
In the classification of solutions to \eqref{eq:wfchain} it will be
convenient to regard two solutions related by a reversal symmetry or
by a cyclic permutation as being equivalent.

Adding the $p$ equations \eqref{eq:wfchain} we immediately obtain a
first integral of the system
\[ \sum_{j=0}^{p-1} w_j= \tfrac12z\sum_{j=0}^{p-1} a_j=
-\tfrac12{\Delta}z.\]

\begin{remark}
We assume throughout this paper that $\Delta\neq0$, which is the only case for which the dressing chain in Definition~\ref{def:wchain} leads to solutions of the $A_N$-Painlev\'e system with normalization \eqref{eq:alpha1} (and thus to higher order generalizations of $\Pfour$). In the $\Delta=0$ case, the dressing chain \eqref{eq:wfchain} defines a completely integrable system whose general solution can be expressed in terms of elliptic and theta functions, \cite{veselov1993dressing}, the corresponding Schr\"odinger operators belonging to the class of finite-gap potentials. The  rational solutions in this class can be expressed in terms of Burchnall-Chaundy \cite{burchnall1930set} (or Adler-Moser \cite{adler1978class})  polynomials, which can be obtained by confluent Darboux-Crum transformations of the free potential at zero energy, \cite{refDG86}.
\end{remark}
In the $\Delta\neq0$ case,  the $A_{2n}$-\p\ system \eqref{eq:Ansystem} and the cyclic dressing chain \eqref{eq:wfchain} are related by the following proposition.

\begin{prop}\label{prop:wtof}
If the tuple of functions and complex numbers $(w_0,\dots,w_{2n}|a_0,\dots,a_{2n})$  satisfies a $(2n+1)$-cyclic Darboux dressing chain with shift $\Delta\neq 0$ as per Definition~\ref{def:wchain}, then the tuple $\left(f_0,\dots,f_{2n}\,\big|\,\a_0,\dots, \a_{2n}\right)$
with
\begin{eqnarray}
f_i(z)&=& c \,(w_i + w_{i+1})\left(cz\right),\qquad i=0,\dots,2n\mod(2n+1),\\
 \a_i&=&c^2 a_i,\\
 c^2&=&-\frac{1}{\Delta}
\end{eqnarray}
 solves the $A_{2n}$-\p\ system \eqref{eq:Ansystem} with normalization \eqref{eq:alpha1}.
\end{prop}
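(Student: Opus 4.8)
The plan is to substitute the proposed change of variables directly into the $A_{2n}$-\p\ system \eqref{eq:Ansystem} and verify both the differential equations and the normalization conditions \eqref{eq:alpha1}, reducing everything to the cyclic dressing-chain relations \eqref{eq:wfchain} and the first integral $\sum_j w_j = -\tfrac12\Delta z$. First I would record the scaling: writing $f_i(z) = c\,(w_i+w_{i+1})(cz)$ means that when I differentiate $f_i$ with respect to $z$, the chain rule produces an extra factor of $c$, so $f_i'(z) = c^2\,(w_i+w_{i+1})'(cz)$. This is precisely why the scaling $c^2 = -1/\Delta$ is introduced — it is designed to absorb the factors of $c$ that appear when the quadratic terms $f_i f_j$ (each carrying $c^2$) and the derivative terms (also carrying $c^2$) are matched against $\a_i = c^2 a_i$.

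The key algebraic step is to compute the combination $\sum_{j=1}^n f_{i+2j-1} - \sum_{j=1}^n f_{i+2j}$ that multiplies $f_i$ in \eqref{eq:Ansystem}. Substituting $f_k = c(w_k + w_{k+1})$, this alternating sum telescopes: the indices run over the full cycle $i+1, i+2, \ldots, i+2n$ with alternating signs, and because each $f_k$ contributes $w_k + w_{k+1}$, adjacent terms cancel in pairs, leaving only the boundary contributions $w_i$ and $w_{i+1}$ (up to the correct signs and the overall factor $c$). I would carry out this telescoping carefully, exploiting the cyclic indexing $\mod(2n+1)$ and the fact that $2n+1$ is odd so the alternating pattern closes up consistently around the cycle. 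Once this collapses to something proportional to $c(w_{i+1} - w_i)$ (or the analogous boundary expression), the full equation \eqref{eq:Ansystem} becomes, after evaluating everything at the argument $cz$ and using $f_i' = c^2(w_i+w_{i+1})'$,
\begin{equation*}
c^2\Big[(w_i+w_{i+1})' + (w_i+w_{i+1})(w_{i+1}-w_i)\Big] = c^2\Big[(w_i+w_{i+1})' + w_{i+1}^2 - w_i^2\Big] = c^2 a_i = \a_i,
\end{equation*}
which is exactly \eqref{eq:wfchain} multiplied by $c^2$. The normalization $f_0+\cdots+f_{2n} = z$ then follows from the first integral: $\sum_i f_i = c\sum_i (w_i+w_{i+1})(cz) = 2c\sum_i w_i(cz) = 2c\cdot(-\tfrac12\Delta)(cz) = -\Delta c^2 z = z$, using $c^2 = -1/\Delta$. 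The parameter normalization $\sum_i \a_i = 1$ follows from $\sum_i \a_i = c^2\sum_i a_i = c^2(-\Delta) = 1$ by \eqref{eq:Deltasumalpha} and the definition of $c^2$.

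The main obstacle I anticipate is the telescoping of the alternating sum: one must check that the boundary terms emerge with exactly the right signs so that the coefficient of $f_i$ reduces to $w_{i+1}-w_i$ and not some other combination, and that the cyclic wraparound (with period $2n+1$ odd) does not introduce a spurious leftover term. Verifying this requires tracking the parity of indices around the odd-length cycle, and it is the one place where the evenness of the $A_{2n}$ case — i.e. the odd period $2n+1$ — is genuinely used; this is presumably the reason alluded to earlier that ``the $A_{2n}$-\p\ system is considerably simpler.'' The remaining verifications (the derivative scaling and the two normalization identities) are routine once the telescoping identity is established.
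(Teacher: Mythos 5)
Your proposal is correct and follows essentially the same route as the paper: the telescoping identity you verify, namely that $\sum_{j=1}^n f_{i+2j-1}-\sum_{j=1}^n f_{i+2j}$ collapses to $w_{i+1}-w_i$ around the odd cycle, is exactly the paper's relation \eqref{eq:wdif} (obtained there from the inverse transformation \eqref{eq:ftow}), and your treatment of the scaling $c^2=-1/\Delta$, the first integral, and the condition \eqref{eq:Deltasumalpha} matches the paper's argument. The only cosmetic difference is direction — you substitute into the Painlev\'e system and reduce to the chain, while the paper rewrites the chain to obtain the system — which is the same computation read backwards.
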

\begin{proof}
The linear transformation
\begin{equation}\label{eq:wtof}
f_i=w_i+w_{i+1}, \qquad i=0,\dots,2n\mod(2n+1)
\end{equation}
is invertible (only in the odd case $p=2n+1$), the inverse transformation being
\begin{equation}\label{eq:ftow}
  w_i= \tfrac{1}{2} \sum_{j=0}^{2n} (-1)^j f_{i+j}, \qquad
  i=0,\dots,2n\mod(2n+1) 
\end{equation}
They imply the relations
\begin{equation}\label{eq:wdif}
  w_{i+1}-w_i = \sum_{j=0}^{2n-1} (-1)^j f_{i+j+1}, \qquad
  i=0,\dots,2n\mod(2n+1). 
\end{equation}
Inserting \eqref{eq:wtof} and \eqref{eq:wdif} into the equations of the cyclic dressing chain \eqref{eq:wfchain} leads to the $A_{2n}$-\p\ system \eqref{eq:Ansystem}. For any constant $c\in\mathbb C$, the scaling transformation
\[f_i\mapsto c f_i,\quad z\mapsto cz,\quad \a_i\mapsto c^2 \a_i \]
preserves the form of the equations \eqref{eq:Ansystem}. The choice $c^2=-\frac{1}{\Delta}$ ensures that the normalization \eqref{eq:alpha1} always holds, for dressing chains with different shifts $\Delta$.
\end{proof}

\begin{remark}
$(2n)$-cyclic dressing chains and $A_{2n-1}$-\p\ systems are also related, but the mapping is given by a rational rather than a linear function. A full treatment of this even cyclic case (which includes  $\Pfive$ and its higher order hierarchy) is considerably harder and shall be treated elsewhere.
\end{remark}

The problem now becomes that of finding and classifying cyclic dressing chains, i.e. Schr\"{o}dinger operators and sequences of Darboux transformations that reproduce the initial potential up to an additive shift $\Delta$ after a fixed given number of transformations.

The theory of exceptional polynomials is intimately related with families of Schr\"{o}dinger operators connected by Darboux transformations \cite{gomez2013conjecture,garcia2016bochner}. Constructing cyclic dressing chains on this class of potentials becomes a feasible task, and knowledge of the effect of rational Darboux transformations on the potentials suggests that the only family of potentials to be considered in the case of odd cyclic dressing chains are the rational extensions of the harmonic oscillator \cite{gomez2013rational}, which are exactly solvable potentials whose eigenfunctions are expressible in terms of exceptional Hermite polynomials.

Each potential in this class can be indexed by a finite set of integers (specifying the sequence of Darboux transformations applied on the harmonic oscillator that lead to the potential), or equivalently by a Maya diagram, which becomes a  very useful representation to capture a notion of equivalence and relations of the type \eqref{eq:shift}.

As mentioned before, the fact that all rational odd cyclic dressing chains (and equivalently rational solutions to the $A_{2n}$-\p\ system) must \textit{necessarily} belong to this class remains an open question. We conjecture that this is indeed the case, and no  rational solutions other than the ones described in the following sections exist.

\section{Cyclic Maya diagrams and rational extensions of the Harmonic oscillator}\label{sec:Maya}

In this Section we construct odd cyclic dressing chains on potentials belonging to the class of rational extensions of the harmonic oscillator. Every such potential is represented by a Maya diagram, a rational Darboux transformation acting on this class will be a flip operation on a Maya diagram and cyclic Darboux chains correspond to cyclic Maya diagrams. With this representation, the main problem of constructing rational cyclic Darboux chains becomes purely algebraic and combinatorial.

Following Noumi \cite{noumi2004painleve}, we define a Maya diagram in the following manner.
\begin{definition}
  A Maya diagram is a set of integers $M\subset\Z$ that contains a
  finite number of positive integers, and excludes a finite number of
  negative integers.  We will use $\cM$ to denote the set of all Maya
  diagrams.
\end{definition}

\begin{definition}\label{def:index}
Let $m_1>m_2>\cdots$ be the elements of a Maya diagram $M$ arranged in decreasing order. By assumption, there exists a unique integer $s_M\in \Z$ such that $m_i = -i+s_M$ for all
$i$ sufficiently large. We define $s_M$ to be the index of $M$.
\end{definition}

We visualize a Maya diagram as a horizontally extended sequence of
$\boxdot$ and $\emptybox$ symbols with the filled symbol $\boxdot$ in
position $i$ indicating membership $i\in M$. The defining assumption
now manifests as the condition that a Maya diagram begins with an
infinite filled $\boxdot$ segment and terminates with an infinite
empty $\emptybox$ segment.

\begin{definition}

Let $M$ be a Maya diagram, and 
\[ M_-= \{ -m-1 \colon m\notin M, m<0\},\qquad M_+ = \{ m\colon m\in
M\,, m\geq 0 \}. \]
Let $s_1>s_2>\cdots > s_p$ and $t_1> t_2>\dots> t_q$ be the
elements of $M_-$ and $M_+$ arranged in descending order. 
 
We define the \textit{Frobenius symbol} of $M$ to be the double
list $(s_1,\ldots, s_p \mid t_q,\ldots, t_1)$.
\end{definition}
It is not hard to show that $s_M=q-p$ is the index of $M$.  The classical Frobenius symbol
\cite{andrews2004integer,olsson1994combinatorics,andrews1998theory} corresponds to the zero index case where $q=p$.
If $M$ is a Maya diagram, then for any $k\in \Z$ so is
\[ M+k = \{ m+k \colon m\in M \}.\]
The behaviour of the index $s_M$ under translation of $k$ is given by
\begin{equation}\label{eq:indexshift}
M'=M+k\quad \Rightarrow \quad s_{M'}=s_M+k.
\end{equation}
We will refer to an equivalence class of Maya diagrams related by such
shifts as an \textit{unlabelled Maya diagram}. One can visualize the
passage from an unlabelled to a labelled Maya diagram as the choice of
placement of the origin.

A Maya diagram $M\subset \Z$ is said to be in standard form if $p=0$
and $t_q>0$.  Visually, a Maya diagram in standard form has only
filled boxes $\boxdot$ to the left of the origin and one empty box
$\emptybox$ just to the right of the origin. Every unlabelled Maya
diagram permits a unique placement of the origin so as to obtain a
Maya diagram in standard form.

In \cite{gomez2016durfee} it was shown that to every Maya diagram we
can associate a polynomial called a Hermite pseudo-Wronskian.
\begin{definition}
  Let $M$ be a Maya diagram and $(s_1,\dots,s_r|t_q,\dots,t_1)$ its
  corresponding Frobenius symbol. Define the polynomial
  \begin{equation}\label{eq:pWdef1} \H_M(z) = \ex{-rz^2}\Wr[ \ex{z^2}
    \th_{s_1},\ldots, \ex{z^2} \th_{s_r}, \h_{t_q},\ldots \h_{t_1} ],
  \end{equation} where $\Wr$ denotes the Wronskian determinant of the
  indicated functions, and
  \begin{equation}
    \label{eq:thndef}
    \th_n(z)={\rm i}^{-n} \h_{n}({\rm i}z)
  \end{equation}
  is the $n\supth$ degree conjugate Hermite polynomial.
\end{definition}

The polynomial nature of $\H_M(z)$ becomes evident once we
represent it using a slightly different determinant.

\begin{prop} The Wronskian $\H_M(z)$ admits the following alternative
  determinantal representation
  \begin{equation}\label{eq:pWdef2} \H_M(z) =
    \begin{vmatrix} \th_{s_1} & \th_{s_1+1} & \ldots &
\th_{s_1+r+q-1}\\ \vdots & \vdots & \ddots & \vdots\\ \th_{s_r} &
\th_{s_r+1} & \ldots & \th_{s_r+r+q-1}\\ \h_{t_q} & D_{z} \h_{t_q} &
\ldots & D_{z}^{r+q-1}\h_{t_q}\\ \vdots & \vdots & \ddots & \vdots\\
\h_{t_1} & D_{z} \h_{t_1} & \ldots & D_{z}^{r+q-1}\h_{t_1}
    \end{vmatrix}
  \end{equation}

\end{prop}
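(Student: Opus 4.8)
The plan is to expand the Wronskian in \eqref{eq:pWdef1} as an explicit $(r+q)\times(r+q)$ determinant and to reduce it to \eqref{eq:pWdef2} row by row. By definition $\Wr[\phi_1,\ldots,\phi_{r+q}]$ is the determinant whose $(i,j)$ entry is $D_z^{j-1}\phi_i$, where here $\phi_i=\ex{z^2}\th_{s_i}$ for $i=1,\ldots,r$ and $\phi_{r+k}=\h_{t_{q-k+1}}$ for $k=1,\ldots,q$. The bottom $q$ rows therefore already have entries $D_z^{j-1}\h_{t_k}$, matching the lower block of \eqref{eq:pWdef2} verbatim, so all the work concerns the top $r$ rows and the prefactor $\ex{-rz^2}$.

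The key step is a raising identity for the conjugate Hermite polynomials: I claim that $(D_z+2z)\th_n=\th_{n+1}$. This should follow from the standard creation-operator relation $(2z-D_z)\h_n=\h_{n+1}$ for the Hermite polynomials together with the definition $\th_n(z)={\rm i}^{-n}\h_n({\rm i}z)$ in \eqref{eq:thndef}; substituting $w={\rm i}z$ and tracking the powers of ${\rm i}$ turns the operator $2z-D_z$ acting at argument $w$ into $D_z+2z$ acting on $\th_n$, and the factor ${\rm i}^2=-1$ accounts for the change of sign so that the index is raised by one. Iterating gives $(D_z+2z)^k\th_n=\th_{n+k}$.

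I then combine this with the elementary conjugation of the derivative by the Gaussian weight, namely the operator identity $D_z^{k}\circ\ex{z^2}=\ex{z^2}\circ(D_z+2z)^{k}$, proved by induction from $D_z(\ex{z^2}f)=\ex{z^2}(D_z+2z)f$. Applying this to each top row yields $D_z^{j-1}\big(\ex{z^2}\th_{s_i}\big)=\ex{z^2}(D_z+2z)^{j-1}\th_{s_i}=\ex{z^2}\,\th_{s_i+j-1}$ for $j=1,\ldots,r+q$. Thus every one of the $r$ top rows carries a common factor $\ex{z^2}$; pulling these factors out of the determinant produces $\ex{rz^2}$, which cancels exactly against the prefactor $\ex{-rz^2}$ in \eqref{eq:pWdef1}. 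What remains in the top block are precisely the entries $\th_{s_i},\th_{s_i+1},\ldots,\th_{s_i+r+q-1}$, and together with the unchanged bottom block this is the determinant \eqref{eq:pWdef2}.

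The only genuinely delicate point, and the one I would write out carefully, is the verification of the raising identity $(D_z+2z)\th_n=\th_{n+1}$, since it is where the factors of ${\rm i}$ and the sign conventions for the Hermite recurrences must be reconciled; everything else is bookkeeping about Wronskian columns and factoring a common exponential out of a determinant.
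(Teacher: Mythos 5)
Your proof is correct. The paper states this proposition without proof (it is imported from \cite{gomez2016durfee}), and your argument---the Gaussian conjugation $D_z^{k}\circ\ex{z^2}=\ex{z^2}\circ(D_z+2z)^{k}$ combined with the raising identity $(D_z+2z)\th_n=\th_{n+1}$, which does hold since $H_{n+1}(w)=2wH_n(w)-2nH_{n-1}(w)$ evaluated at $w={\rm i}z$ turns the $-2nH_{n-1}$ term into $+2n\th_{n-1}$ via ${\rm i}^{-2}=-1$, giving $(D_z+2z)\th_n=2n\th_{n-1}+2z\th_n=\th_{n+1}$---is precisely the standard derivation of this identity, with the row factorization of $\ex{z^2}$ cancelling the prefactor $\ex{-rz^2}$ exactly as you describe.
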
 
\noindent
The term Hermite
pseudo-Wronskian was coined in \cite{gomez2016durfee} because  \eqref{eq:pWdef2} is a mix of a Casoratian and a Wronskian
determinant.  For all Maya diagrams in the same equivalence class, their associated Hermite pseudo-Wronskians enjoy a very simple relation.

\begin{prop}[\cite{gomez2016durfee} Theorem 1]\label{prop:equiv}
  Let  $\widehat{\H}_M(z)$ be the normalized pseudo-Wronskian
  \begin{equation}
    \label{eq:hHdef}
    \widehat{\H}_M(z) = \frac{(-1)^{rq}\H_M(z)}{\prod_{1\leq i<j\leq r} (2s_j-2s_i)\prod_{1\leq
        i<j\leq q}
      (2 t_i-2t_j)}.
  \end{equation}
Then for any Maya diagram $M$ and $k\in\Z$ we have
  \begin{equation} \label{eq:HMequiv}
       \widehat{\H}_M(z) =  \widehat{\H}_{M+k}(z).
  \end{equation}
\end{prop}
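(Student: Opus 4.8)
The plan is to reduce the statement to the case of a unit shift and then to track how the determinant \eqref{eq:pWdef2} and the Vandermonde-type denominator in \eqref{eq:hHdef} transform. Since $M+k$ is obtained from $M$ by iterating $M\mapsto M+1$ (and its inverse), it suffices to prove $\widehat{\H}_M=\widehat{\H}_{M+1}$ for every Maya diagram $M$. First I would record how the Frobenius symbol $(s_1,\dots,s_r\mid t_q,\dots,t_1)$ behaves under $M\mapsto M+1$. A short computation with the sets $M_-$ and $M_+$ shows there are exactly two cases, governed by whether $-1\in M$. If $-1\in M$, then $r$ is unchanged, every $s_i$ drops by $1$, every $t_j$ rises by $1$, and a new smallest part $t=0$ is created, so that $\H_{M+1}$ is a determinant of size $r+q+1$. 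If $-1\notin M$, then $s_r=0$, this part is deleted, the remaining $s_i$ drop by $1$, every $t_j$ rises by $1$, and the determinant shrinks by one. In both situations the index increases by $1$, consistent with \eqref{eq:indexshift}.

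I would treat the first case directly. Here the new Hermite row of $\H_{M+1}$ corresponds to $t=0$, and since $\h_0=1$ this row is $(\h_0,D_z\h_0,\dots)=(1,0,\dots,0)$. Expanding the determinant \eqref{eq:pWdef2} along it leaves a minor in which the $s$-rows, after deleting the first column, are exactly the rows $\th_{s_i},\dots,\th_{s_i+r+q-1}$ of $\H_M$ (the drop $s_i\mapsto s_i-1$ is undone by the column shift), while each remaining Hermite row $\h_{t_j+1},D_z\h_{t_j+1},\dots$ collapses, via the identity $D_z\h_n=2n\,\h_{n-1}$, onto $2(t_j+1)$ times the corresponding row of $\H_M$. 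Collecting the cofactor sign and these scalars gives $\H_{M+1}=(-1)^r\big(\prod_{j=1}^q 2(t_j+1)\big)\H_M$.

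It then remains to check that this factor is cancelled by the change in the normalizing denominator of \eqref{eq:hHdef}. The $s$-Vandermonde $\prod(2s_j-2s_i)$ is invariant under the uniform shift $s_i\mapsto s_i-1$; the $t$-Vandermonde acquires precisely the extra factor $\prod_{i=1}^q 2(t_i+1)$ coming from the new part $t=0$; and the prefactor $(-1)^{rq}$ changes to $(-1)^{r(q+1)}$, supplying a further $(-1)^r$. These combine to absorb exactly the scalar above, yielding $\widehat{\H}_{M+1}=\widehat{\H}_M$. For the second case ($-1\notin M$) I would avoid repeating a messier computation (the trivial entry $\th_0=1$ now sits inside a Casoratian row, so a direct reduction would need the three-term recurrence for $\th_n$) by invoking the conjugation symmetry $z\mapsto \mathrm{i}z$ of \eqref{eq:thndef}, which interchanges $\h_n$ and $\th_n$ up to the factor $\mathrm{i}^n$. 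This symmetry is realized on diagrams by the reflection $\widetilde{M}=\{-1-m:m\notin M\}$, which satisfies $\widetilde{M}_+=M_-$, $\widetilde{M}_-=M_+$ and $\widetilde{M+1}=\widetilde{M}-1$, and it transforms $\widehat{\H}$ by $z\mapsto \mathrm{i}z$ together with a shift-independent constant. Under this reflection the condition $-1\notin M$ becomes $0\in\widetilde{M}$, i.e. $-1\in\widetilde{M}-1$, so the second case for $M$ is carried to the first case for $\widetilde{M}-1$, already established.

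The main obstacle I anticipate is bookkeeping rather than ideas: keeping the cofactor sign, the row/column ordering of the two blocks, and the product of Hermite scalars consistent, and then verifying that these match the shift of the two Vandermonde products and of the sign $(-1)^{rq}$ exactly. Establishing the reflection transformation law for $\widehat{\H}$ precisely enough to transport the normalization (so that the proportionality constants on both sides of the claimed equality genuinely cancel) is the other delicate point.
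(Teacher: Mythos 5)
You should first know that the paper does not actually prove Proposition~\ref{prop:equiv}: it is imported from \cite{gomez2016durfee} (Theorem 1), with only a remark that the normalization \eqref{eq:hHdef} upgrades ``equality up to a multiplicative constant'' to exact equality. So your argument can only be judged on its own terms, and on those terms most of it is correct. The reduction to $k=1$, the two-case analysis of the Frobenius symbol under $M\mapsto M+1$ (governed by whether $-1\in M$), and the entire treatment of the case $-1\in M$ all check out: the new row of $\H_{M+1}$ is $(1,0,\ldots,0)$ sitting in position $r+1$, the cofactor expansion gives
\[
\H_{M+1}(z) \;=\; (-1)^r\Bigl(\textstyle\prod_{j=1}^{q}2(t_j+1)\Bigr)\H_M(z),
\]
and this is exactly cancelled by the change $(-1)^{rq}\mapsto(-1)^{r(q+1)}$ together with the new factors $\prod_{i=1}^q\bigl(2(t_i+1)-0\bigr)$ acquired by the $t$-Vandermonde. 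The index and size bookkeeping ($r+q\mapsto r+q+1$, resp.\ $r+q-1$) is also right.

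The genuine gap is the case $-1\notin M$. There you lean on an unproved lemma: that the reflection $\widetilde M=\{-1-m : m\notin M\}$ transforms $\widehat{\H}$ by $z\mapsto \mathrm{i}z$ up to a constant that is invariant under $M\mapsto M+1$. Since your whole second case rests on this, the proof is incomplete as written; you flag the point yourself, but flagging it does not fill it. The lemma is in fact true and your strategy does close: using $\th_n(\mathrm{i}z)=\mathrm{i}^n\h_n(z)$, $\h_n(\mathrm{i}z)=\mathrm{i}^n\th_n(z)$, the scaling of a size-$N$ Wronskian under $z\mapsto \mathrm{i}z$, and one full row reversal, one obtains
\[
\H_{\widetilde M}(z) \;=\; \mathrm{i}^{-\binom{r+q}{2}-\sigma}\,\H_M(\mathrm{i}z),
\qquad \sigma=\sum_{i=1}^r s_i+\sum_{j=1}^q t_j,
\]
hence $\widehat{\H}_{\widetilde M}(z)=c_M\,\widehat{\H}_M(\mathrm{i}z)$ with $c_M=(-1)^{\binom{r}{2}+\binom{q}{2}}\,\mathrm{i}^{-\binom{r+q}{2}-\sigma}$, the extra signs arising because the Vandermonde products of $\widetilde M$ have reversed orientation relative to those of $M$. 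A short check against your own Frobenius bookkeeping then gives $c_{M+1}=c_M$; for instance, when $-1\notin M$ the ratio is $(-1)^{-(r-1)}\,\mathrm{i}^{\,(r+q-1)-(q-r+1)}=(-1)^{r-1}(-1)^{r-1}=1$. This is precisely the ``shift-independence'' you invoke. Note, however, that this verification is a computation of essentially the same size as attacking the second case directly, so the reflection buys elegance rather than labour — and either way it must be written out before your argument constitutes a proof.
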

The remarkable aspect of equation \eqref{eq:HMequiv} is that the identity involves determinants of different sizes. Note that the statement of this proposition is slightly different than the original result proved in \cite{gomez2016durfee}, due to the introduction of \textit{normalized} pseudo-Wronskians \eqref{eq:hHdef} to achieve strict equality in \eqref{eq:HMequiv} rather than just equality up to a multiplicative constant.
As mentioned above, every unlabelled Maya diagram contains a Maya diagram in standard form, and its associated Hermite pseudo-Wronskian \eqref{eq:pWdef1} is just an ordinary Wronskian determinant whose entries are Hermite polynomials. This will not be in general the smallest determinant in the equivalence class. The procedure to find the smallest equivalent determinant was given in \cite{gomez2016durfee}. 

Due to Proposition \ref{prop:equiv}, we could restrict the analysis without loss of generality to Maya diagrams in standard form and Wronskians of Hermite polynomials, but we will employ the general notation as it 
brings conceptual clarity to the description of cyclic Maya diagrams.

We will now introduce and study a class of potentials for
Schr\"{o}dinger operators that will be used as building blocks for
cyclic dressing chains.  Rational extensions of classic potentials
have been studied in a number of papers \cite{grandati2012multistep,grandati2011solvable, odake2013extensions,marquette2013two,bagchi2015rational} and their set of eigenfunctions are expressible in terms of exceptional orthogonal polynomials.

\begin{definition}
A rational extension of the harmonic
oscillator is a potential of the form
\[ U(z) = z^2 + \frac{a(z)}{b(z)},\] 
with  $a(x)$, $b(x)$  polynomials with $\deg a\leq \deg b$,
that is \textit{exactly solvable by polynomials}.  This means that for
all but finitely many $n\in\mathbb N$, the operator $\cL=-D^2+U(z)$ has
formal eigenfunctions of the form
\[ \psi_n = \mu(z) y_n(z),\]
where $\mu(z)$ is a fixed function and where $y_n(z)$ are polynomials
of degree $n$.
\end{definition}

If $b(z)$ has no real zeros, then $\cL$ is a Sturm-Liouville operator on $\R$ with quasi-polynomial eigenfunctions.  Exact solvability by polynomials is a very stringent property, which is equivalent to trivial monodromy \cite{refDG86,oblomkov1999monodromy}. In fact, the next Proposition proved in\cite{gomez2013rational} states that rational extensions of the harmonic oscillator can be put in one to one correspondence with Maya diagrams.

\begin{prop}[\cite{gomez2013rational} Theorem 1.1]
  \label{prop:ratext}
Let $M\subset \Z$ be a Maya diagram.  Define
\begin{equation}
  \label{eq:UMdef}
  U_M(z) = z^2 - 2 D_z^2 \log \H_M(z) + 2s_M,
\end{equation}
where $\H_M(z)$ is the corresponding pseudo-Wronskian \eqref{eq:pWdef1}--\eqref{eq:pWdef2}, and $s_M\in \Z$ is the index of $M$.
Up to an additive constant, every rational extension of the harmonic
  oscillator takes the form \eqref{eq:UMdef}.
\end{prop}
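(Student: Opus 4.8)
The plan is to establish the two halves of the correspondence separately, and I would begin with the forward direction: that every Maya diagram $M$ yields a potential of the asserted shape. Here I would realize $U_M$ as the output of a Darboux--Crum transformation applied to the bare harmonic oscillator $\cL_0=-D_z^2+z^2$. This operator has polynomial bound states $\h_n(z)\ex{-z^2/2}$ at energy $2n+1$ and, through the formal symmetry $z\mapsto\mathrm{i}z$, non-normalizable states $\th_n(z)\ex{z^2/2}$ at energy $-(2n+1)$. The Frobenius symbol $(s_1,\dots,s_r\mid t_q,\dots,t_1)$ of $M$ prescribes exactly which of these to use as Crum seeds: the $t_j$ select bound states and the $s_i$ select the conjugate negative-energy states. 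The Crum formula then gives
\[
 U_M(z)=z^2-2D_z^2\log\Wr\big[\th_{s_1}\ex{z^2/2},\dots,\th_{s_r}\ex{z^2/2},\h_{t_q}\ex{-z^2/2},\dots,\h_{t_1}\ex{-z^2/2}\big].
\]
Factoring $\ex{-z^2/2}$ out of all $r+q$ columns and comparing with the definition \eqref{eq:pWdef1} shows that this Wronskian equals $\ex{-s_M z^2/2}\,\H_M(z)$, where $s_M=q-r$ is the index. Since $-2D_z^2\log\ex{-s_M z^2/2}=2s_M$, this reproduces \eqref{eq:UMdef} exactly, and the additive constant $2s_M$ is seen to originate entirely from the Gaussian gauge factor. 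Because $\H_M$ is a polynomial, $-2D_z^2\log\H_M$ is a rational function $a/b$ with $b=\H_M^2$ and $\deg a\le\deg b-2$, so $U_M$ has the required form $z^2+a/b$; exact solvability by polynomials is inherited from the harmonic oscillator through the Darboux--Crum map, whose transformed eigenfunctions are exceptional Hermite functions.

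For the converse I would start from an arbitrary rational extension $U=z^2+a(z)/b(z)$ that is exactly solvable by polynomials and invoke the stated equivalence between exact solvability by polynomials and trivial monodromy. Thus $\cL=-D_z^2+U$ has trivial monodromy at every complex pole of $a/b$ together with quadratic growth $z^2$ at infinity. The heart of the argument is then the classification of monodromy-free Schr\"odinger operators with this prescribed behaviour at infinity, due to Oblomkov (and Duistermaat--Gr\"unbaum in the rational setting): every such operator is obtained from the harmonic oscillator by a finite sequence of rational state-adding and state-deleting Darboux transformations, and the data of which transformations are performed is encoded precisely by a finite subset of $\Z$, i.e.\ a Maya diagram. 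Feeding this data back through the forward computation identifies $U$ with $U_M$ up to a constant, the residual constant absorbing the ambiguity in the placement of the origin, equivalently the shift $M\mapsto M+k$ under which $U_M\mapsto U_M+2k$ by Proposition~\ref{prop:equiv}.

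I expect the genuine obstacle to lie entirely in the converse, and specifically in the monodromy classification step. The forward direction is a direct Crum computation combined with the pseudo-Wronskian identity \eqref{eq:pWdef1}--\eqref{eq:pWdef2}, but showing that trivial monodromy plus quadratic growth forces a Darboux--Crum origin in the harmonic oscillator requires the full local analysis of the apparent singularities (each pole must carry a trivial-monodromy exponent pattern, forcing the characteristic $\ell(\ell+1)/(z-z_0)^2$ local form with integer $\ell$ and the attendant higher-order cancellation conditions) together with a global count matching the seeds to a finite integer set. This is exactly where the deep input of \cite{oblomkov1999monodromy,refDG86} enters, and reproving it from scratch would be the substantial part of the theorem.
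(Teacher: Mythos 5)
The paper does not actually prove this proposition; it is imported verbatim from \cite{gomez2013rational} (Theorem 1.1), so there is no internal proof to compare against. Your outline is correct and reconstructs essentially the same argument as that reference: the forward direction is the Darboux--Crum computation in which factoring the Gaussian gauge out of the $r+q$ seed functions identifies the Crum Wronskian with $\ex{-s_Mz^2/2}\H_M(z)$ and produces the additive constant $2s_M$, while the converse rests on the equivalence of exact solvability by polynomials with trivial monodromy and on the Oblomkov/Duistermaat--Gr\"unbaum classification \cite{oblomkov1999monodromy,refDG86} of monodromy-free potentials with quadratic growth, exactly the deep inputs used in \cite{gomez2013rational}, with the residual constant absorbed by the shift $M\mapsto M+k$ as in Proposition~\ref{prop:equiv}.
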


The class of Schr\"{o}dinger operators with potentials that are rational extensions of the harmonic oscillator is invariant
under a certain class of  rational Darboux transformations, which we
now describe.

\begin{definition}
We define the flip at position $m\in \Z$ to be the involution
$\phi_m:\cM\to \cM$ defined by
\begin{equation}\label{eq:flipdef}
 \phi_m : M \mapsto
\begin{cases}
   M \cup \{ m \}, & \text{if}\quad m\notin M, \\
   M \setminus \{ m \},\quad & \text{if}\quad m\in M.
\end{cases}\qquad M\in \cM.
\end{equation}
\end{definition}
\noindent
In the first case, we say that $\phi_m$ acts on $M$ by a
state-deleting transformation ($\emptybox\to \boxdot$).  In the second
case, we say that $\phi_m$ acts by a state-adding transformation
($\boxdot\to\emptybox$).

It can be shown that every quasi-rational eigenfunction
\cite{gomez2004supersymmetry,gomez2004darboux} of $ \cL=-D_z^2+ U_M(z)$
has the form
\begin{equation}
  \label{eq:seedfunc}
  \psi_{M,m} = \exp(\tfrac12\ep z^2)\frac{\H_{\phi_m(M)}(z)}{\H_M(z)}, \qquad m\in \Z,
\end{equation}
with 
\[ \ep = \begin{cases}
  -1, & \text{if}\quad m\notin M, \\
  +1,\quad & \text{if}\quad m\in M,
\end{cases} 
\]
Explicitly, we have
\begin{equation}
    \label{eq:Mneigenfunc}
    \cL \psi_{M,m}  = (2m+1) \psi_{M,m} ,\quad m\in \Z.    
  \end{equation}

\begin{remark}
The seed eigenfunctions \eqref{eq:seedfunc} include the true eigenfunctions of $\cL$ plus another set of formal non square-integrable eigenfunctions, sometimes known in the physics literature as \textit{virtual states},\cite{odake2013krein,odake2011exactly}. For a correct spectral theoretic interpretation one needs to ensure that the potential $U_M$ is regular, i.e. that $\H_M(z)$ has no zeros in $\R$. The set of Maya diagrams for which $\H_M(z)$ has no real zeros was characterized (in a more general setting) independently by Krein \cite{krein1957continuous} and Adler \cite{adler1994modification}, while the number of real zeros for $\H_M$ was given in \cite{garcia2015oscillation}. However, for the purpose of this paper it is convenient to stay within a purely formal setting and keep the whole class of potentials $U_M$, regardless of whether they have real poles or not.
\end{remark}

The relation between dressing chains of Darboux transformations for the class of operators \eqref{eq:UMdef} and flip operations on Maya diagrams is made explicit by the following proposition. 

\begin{prop}
  \label{prop:UMflip}
  Two Maya diagrams $M, M'$ are related by a flip \eqref{eq:flipdef}
  if and only if their associated rational extensions $U_M,U_{M'}$, see \eqref{eq:UMdef}, are
  connected by a Darboux transformation \eqref{eq:Uplus1}.
\end{prop}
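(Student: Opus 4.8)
The plan is to prove both implications by exploiting the explicit form \eqref{eq:seedfunc} of the quasi-rational seed functions together with the classification recorded in \eqref{eq:Mneigenfunc}, reducing everything to a short computation with $D_z^2\log\H_M$. For the forward implication, suppose $M'=\phi_m(M)$. I would take as seed the quasi-rational eigenfunction $\psi_{M,m}$ of \eqref{eq:seedfunc} and set $w=(\log\psi_{M,m})'=\ep z + D_z\log\H_{M'}-D_z\log\H_M$. Since $\cL\psi_{M,m}=(2m+1)\psi_{M,m}$ by \eqref{eq:Mneigenfunc}, this $w$ automatically solves the Riccati equation $w'+w^2=U_M-(2m+1)$, so it genuinely defines a Darboux transformation in the sense of \eqref{eq:Dxform}. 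Substituting $w'=\ep+D_z^2\log\H_{M'}-D_z^2\log\H_M$ into $U_M-2w'$ and using \eqref{eq:UMdef} for both potentials, the $D_z^2\log\H_M$ terms cancel and one is left with $U_M-2w'=z^2-2D_z^2\log\H_{M'}+2s_M-2\ep$. This matches $U_{M'}$ exactly provided $s_{M'}=s_M-\ep$.

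Thus the only nonroutine ingredient in the forward direction is the index-shift identity $s_{\phi_m(M)}=s_M-\ep$. I would establish it from the charge description $s_M=|M_+|-|M_-|$ implicit in Definition~\ref{def:index}: inserting a filled box ($m\notin M$, so $\ep=-1$) either adds one element to $M_+$ when $m\ge 0$ or removes one element from $M_-$ when $m<0$, so $s_M$ increases by one in either case; deleting a box ($m\in M$, so $\ep=+1$) symmetrically decreases $s_M$ by one. In all cases the change is $-\ep$, which is precisely what the computation above requires.

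For the converse, suppose $U_M$ and $U_{M'}$ are connected by a Darboux transformation, i.e. there is an eigenfunction $\psi$ of $\cL=-D_z^2+U_M$ with $U_{M'}=U_M-2(\log\psi)''$. Writing both potentials via \eqref{eq:UMdef}, the $z^2$ terms cancel and one finds $(\log\psi)''=D_z^2\log(\H_{M'}/\H_M)+(s_M-s_{M'})$. Since the right-hand side is an exact second $z$-derivative plus a constant, integrating twice introduces no logarithmic terms and yields $\psi=C\,e^{bz}\,e^{\frac12(s_M-s_{M'})z^2}\,\H_{M'}/\H_M$, which is quasi-rational. By the classification recalled in \eqref{eq:seedfunc}, every quasi-rational eigenfunction of $\cL$ is a scalar multiple of some $\psi_{M,m}$; hence $w=(\log\psi)'=(\log\psi_{M,m})'$ and, by the forward computation, $U_{M'}=U_{\phi_m(M)}$. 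Since Maya diagrams are in one-to-one correspondence with their potentials $U_M$ (Proposition~\ref{prop:ratext}), this forces $M'=\phi_m(M)$, so $M$ and $M'$ are related by a flip.

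I expect the main obstacle to be the converse, and specifically the step guaranteeing that the seed function of an \emph{arbitrary} Darboux transformation between two rational extensions is itself quasi-rational, so that the classification \eqref{eq:seedfunc} applies. The structural formula \eqref{eq:UMdef} is exactly what makes this clean: it forces $(\log\psi)''$ to be an exact second log-derivative plus a constant, ruling out the logarithmic antiderivatives that a generic rational right-hand side would produce, and thereby pinning $\psi$ down to the quasi-rational class and ultimately to a single flip.
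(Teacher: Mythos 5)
Your proof is correct and follows essentially the same route as the paper's: the forward direction uses the seed function \eqref{eq:seedfunc} together with the index shift $s_{\phi_m(M)}=s_M-\ep$ to verify \eqref{eq:Uplus1}, and the converse integrates the potential difference to force a quasi-rational seed, then invokes the classification \eqref{eq:seedfunc} to identify it with some $\psi_{M,m}$. If anything, your version is slightly more complete than the paper's, since you treat the state-adding and state-deleting cases uniformly via $\ep$, prove rather than assert the index-shift identity, and derive (rather than posit) the Gaussian-times-rational form of the seed function in the converse.
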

\begin{proof}
  Suppose that $m\notin M$ and that $M' = M \cup \{ m\}$ is a
  state-deleting flip transformation of $M$. The seed function for the factorization is $\psi_{M,m}$ defined in \eqref{eq:seedfunc}.   Set
  \begin{equation}
    \label{eq:f1statedelete}
    w_{M,m}(z) = \frac{\psi_{M,m}'(z)}{\psi_{M,m}(z)} = -z + \frac{\H_{M'}'(z)}{\H_{M'}(z)} - \frac{\H_{M}'(z)}{\H_{M}(z)}.
  \end{equation}
  Since
  \[ s_{M'} = s_M +1,\]
  by \eqref{eq:UMdef}, we have 
  \begin{equation}
    \label{eq:UMM'}
    \tfrac12\big[U_{M'}(z)-U_{M}(z)\big] = 1+ \ddz \left(\frac{\H_{M}'(z)}{\H_{M}(z)}-  \frac{\H_{M'}'(z)}{\H_{M'}(z)} \right) = -w_{M,m}'(z),
  \end{equation}
so that \eqref{eq:Uplus1} holds.
  Conversely, suppose that $M$ and $M'$ are such that \eqref{eq:UMM'}
  holds for some $w = w(z)$. If we define
  \[ w(z) = \frac{\psi'(z)}{\psi(z)},\qquad
   \psi(z) = \ex{-\tfrac12z^2}\frac{ \H_{M'}(z)}{\H_{M}(z)} ,\]
   then $\psi $ must be a quasi-rational seed function for $U_M$ and
   it follows by \eqref{eq:seedfunc} of Proposition \ref{prop:ratext} 
   that $M'= M\cup \{m\}$ for some $m\notin M$. The corresponding result for state-adding Darboux transformations is done in a similar way.
\end{proof}

We see thus that the class of rational extensions of the harmonic
oscillator is indexed by Maya diagrams, and that the Darboux
transformations that preserve this class can be described by flip
operations on Maya diagrams.
It now becomes feasible to characterize cyclic dressing chains built
on this class of potentials.

\begin{definition}
  \label{def:multiflip}
  For $p\in \N$ let $\cZ_p$ denote the set of all subsets of $\Z$
  having cardinality $p$.
  For $\bmu= \{ \mu_1,\ldots, \mu_p\}\in \cZ_p$ we now define
  $\phi_{\bmu}$ to be the multi-flip
   \begin{equation}
     \label{eq:phimudef}
     \phi_{\bmu}= \phi_{\mu_1} \circ \cdots \circ \phi_{\mu_{p}}.
\end{equation}
\end{definition}

We are now ready to introduce the basic concept of this section.

\begin{definition}
  We say that $M$ is $p$-cyclic with shift $k$, or $(p,k)$ cyclic, if
  there exists a $\bmu \in \cZ_p$ such that
  \begin{equation}
    \label{eq:cyclicMdef}
    \phi_\bmu(M) = M+k.
  \end{equation}
  We will say that $M$ is $p$-cyclic if it is $(p,k)$ cyclic for some
  $k\in \Z$.
\end{definition}

\begin{prop}
  \label{prop:muM1M2}
  For Maya diagrams $M,M'\in \cM$, we define the set $ \Upsilon(M,M') $ as the symmetric difference between $M$ and $M'$:
  \begin{equation}
    \label{eq:muM1M2}
    \Upsilon(M,M') = (M \setminus M') \cup (M'\setminus M).
  \end{equation}
  Then the multi-flip $\phi_\bmu$ where $\bmu=\Upsilon(M,M')$ is the unique
  multi-flip such that $ M' = \phi_{\bmu}(M)$ and 
  $M = \phi_{\bmu}(M')$.
\end{prop}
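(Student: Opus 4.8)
The plan is to reduce everything to the elementary observation that each flip $\phi_m$ alters a Maya diagram only at the single position $m$, so that a multi-flip behaves like symmetric difference (equivalently, coordinatewise addition of indicator functions over $\Z/2\Z$). First I would record that for distinct positions the flips $\phi_{\mu_1},\dots,\phi_{\mu_p}$ commute, since each merely toggles membership at its own position and fixes all others; consequently $\phi_\bmu$ is well defined independently of the order chosen in \eqref{eq:phimudef}, and for any Maya diagram $N$ the set $\phi_\bmu(N)$ is obtained from $N$ by toggling membership at precisely the positions listed in $\bmu$. In particular $\phi_\bmu(N)$ and $N$ differ exactly on $\bmu$, and this single fact drives both existence and uniqueness.

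Next I would check that $\bmu=\Upsilon(M,M')$ is a legitimate multi-flip index, i.e. that it is finite and hence lies in some $\cZ_p$. By the defining property of a Maya diagram, both $M$ and $M'$ contain all sufficiently negative integers and exclude all sufficiently positive integers; hence they agree outside a finite range, so $\Upsilon(M,M')$ is finite with $p=|\Upsilon(M,M')|$. With $\bmu$ in hand, the identity $M'=\phi_\bmu(M)$ follows position by position: at a position $m\in\bmu$ the diagrams $M$ and $M'$ disagree by definition of the symmetric difference, so toggling $M$ there produces agreement with $M'$; at a position $m\notin\bmu$ they already agree and $\phi_\bmu$ leaves $m$ untouched. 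Since symmetric difference is symmetric, $\Upsilon(M,M')=\Upsilon(M',M)$, and the same argument with the roles of $M$ and $M'$ interchanged yields $M=\phi_\bmu(M')$.

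Finally, for uniqueness I would argue that any multi-flip $\phi_{\boldsymbol{\nu}}$ with $\phi_{\boldsymbol{\nu}}(M)=M'$ must satisfy $\boldsymbol{\nu}=\Upsilon(M,M')$: by the first step $\phi_{\boldsymbol{\nu}}(M)$ differs from $M$ exactly on $\boldsymbol{\nu}$, whereas $M'$ differs from $M$ exactly on $\Upsilon(M,M')$, forcing the two index sets to coincide. There is no serious obstacle here; the only point genuinely requiring care is the finiteness of the symmetric difference, which is precisely what guarantees that $\phi_\bmu$ is a bona fide multi-flip rather than a formal product of infinitely many flips, and thus what makes the statement meaningful rather than a vacuous set identity.
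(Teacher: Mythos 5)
Your proof is correct. The paper in fact offers no proof of this proposition at all --- it is stated as immediate from the definitions --- and your argument (flips at distinct sites commute, so $\phi_\bmu$ toggles membership exactly on $\bmu$; hence $\phi_{\boldsymbol{\nu}}(M)=M'$ forces $\boldsymbol{\nu}=\Upsilon(M,M')$, with finiteness of the symmetric difference guaranteeing $\bmu$ lies in some $\cZ_p$) is precisely the intended elementary verification, with the details properly filled in.
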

As an immediate corollary, we have the following.
\begin{prop}
  Let $k$ be a non-zero integer.  Every Maya diagram $M\in \cM$ is
  $(p,k)$ cyclic where $p$ is the cardinality of
  $\bmu=\Upsilon(M,M+k)$.
\end{prop}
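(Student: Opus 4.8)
The plan is to read off the statement directly from Proposition~\ref{prop:muM1M2} by specializing its second Maya diagram to a translate of the first. First I would recall that whenever $M$ is a Maya diagram and $k\in\Z$, the translate $M+k=\{m+k:m\in M\}$ is again a Maya diagram, as noted in the paragraph preceding \eqref{eq:indexshift}. So I may set $M'=M+k$ and form the symmetric difference $\bmu=\Upsilon(M,M+k)=(M\setminus(M+k))\cup((M+k)\setminus M)$. Proposition~\ref{prop:muM1M2} then asserts, for this choice, that $\phi_{\bmu}(M)=M+k$; once I know that $\bmu\in\cZ_p$ for the finite value $p=|\bmu|$, the defining equation \eqref{eq:cyclicMdef} of $(p,k)$-cyclicity holds verbatim, and the proof is complete.

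The only point requiring genuine argument is therefore that $\bmu=\Upsilon(M,M+k)$ is a nonempty finite set, so that $p=|\bmu|$ is a well-defined positive integer and $\bmu\in\cZ_p$ with $p\in\N$. For finiteness I would invoke the defining property of a Maya diagram: $M$ contains all sufficiently negative integers and excludes all sufficiently positive integers, and the same holds for $M+k$. Concretely, choose $N$ large enough that every integer $<-N$ lies in both $M$ and $M+k$ and every integer $>N$ lies in neither; such an $N$ exists because each diagram departs from the ``all-filled'' pattern on the left and the ``all-empty'' pattern on the right only within a bounded window, and shifting by the fixed integer $k$ merely translates that window. Then $\Upsilon(M,M+k)\subseteq\{-N,\dots,N\}$ is finite. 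For nonemptiness I would use that $k\neq0$: by \eqref{eq:indexshift} the indices satisfy $s_{M+k}=s_M+k\neq s_M$, whence $M+k\neq M$ and the symmetric difference is nonempty. This is precisely where the hypothesis $k\neq0$ enters, guaranteeing $p\geq1$ so that $\bmu$ indexes a genuine multi-flip.

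With $\bmu\in\cZ_p$ established, I would simply quote Proposition~\ref{prop:muM1M2} to obtain $\phi_{\bmu}(M)=M+k$ and conclude by the definition of $(p,k)$-cyclicity. The statement is indeed an immediate corollary, and I do not expect any real obstacle beyond the bookkeeping of the finiteness bound; the substantive content---that the symmetric-difference multi-flip carries $M$ to $M'$ and back---is supplied wholesale by Proposition~\ref{prop:muM1M2}, so no new combinatorial input is needed here.
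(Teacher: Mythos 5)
Your proposal is correct and matches the paper's approach exactly: the paper states this result as an immediate corollary of Proposition~\ref{prop:muM1M2} (specializing $M'$ to $M+k$) with no further proof given. Your additional verification that $\Upsilon(M,M+k)$ is finite (from the defining tail conditions of a Maya diagram) and nonempty when $k\neq 0$ (via $s_{M+k}=s_M+k$) supplies precisely the routine bookkeeping the paper leaves implicit.
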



We are now able to establish the link between cyclic Maya diagrams and cyclic
dressing chains composed of rational extensions of the harmonic
oscillator.
\begin{prop}
  \label{prop:Mwcorrespondence}
  Let $M\in \cM$ be a Maya diagram, $k$ a non-zero integer, and $p$
  the cardinality of $\bmu=\Upsilon(M,M+k)$. Let
  $\bmu = \{\mu_0,\ldots, \mu_{p-1}\}$ be an arbitrary enumeration of
  $\bmu$ and set
  \begin{equation}
    \label{eq:Mchain}
    M_0 = M,\quad M_{i+1} = \phi_{\mu_i}(M_{i}),\qquad
    i=0,1,\ldots, p-1
  \end{equation}
  so that $M_p = M_0+k$ by construction.  Set
  \begin{align}
    &w_i(z)= s_{i} z+ \frac{\H_{M_{i+1}}'(z)}{\H_{M_{i+1}}(z)}- \frac{\H_{M_{i}}'(z)}{\H_{M_{i}}(z)},\qquad i=0,\dots,p-1. \label{eq:HM2w}\\
    &a_i=2(\mu_i-\mu_{i+1}), \label{eq:mu2alpha}
      \intertext{where}
     \label{eq:sign}
    &s_i=\begin{cases}
      -1, & \textit{if}\quad  \mu_i\notin M, \\
    +1, & \textit{if}\quad \mu_i\in M,
  \end{cases}
\end{align}
and 
\[ \mu_p = \mu_0 + k.\]
Then, $(w_0,\ldots, w_{p-1}; a_0,\ldots, a_{p-1})$
constitutes a rational solution to the $p$-cyclic dressing chain
\eqref{eq:wfchain} with shift $\Delta=2k$.
\end{prop}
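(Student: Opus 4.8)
The plan is to recognize the sequence $(w_0,\dots,w_{p-1})$ as the chain of logarithmic derivatives of seed functions for the Darboux transformations $\cL_{M_i}\to\cL_{M_{i+1}}$ supplied by Proposition~\ref{prop:UMflip}, and then to close the chain cyclically using the translation identity of Proposition~\ref{prop:equiv}. Once the $w_i$ are identified with seed log-derivatives, the non-cyclic part of \eqref{eq:wfchain} is immediate from the Riccati calculus of Section~\ref{sec:dressing}, and the only real work is the wrap-around at $i=p-1$.

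First I would verify that the sign $s_i$ in \eqref{eq:sign} is read off correctly. The flips $\phi_{\mu_0},\dots,\phi_{\mu_{p-1}}$ act at the \emph{distinct} positions $\mu_0,\dots,\mu_{p-1}$, so none of the flips preceding step $i$ alters membership at position $\mu_i$; hence $\mu_i\in M_i$ if and only if $\mu_i\in M$. Thus $\phi_{\mu_i}$ is state-deleting exactly when $s_i=-1$ and state-adding exactly when $s_i=+1$, and a comparison with \eqref{eq:f1statedelete} shows that \eqref{eq:HM2w} is precisely the log-derivative $w_{M_i,\mu_i}=\psi'_{M_i,\mu_i}/\psi_{M_i,\mu_i}$ of the seed function \eqref{eq:seedfunc}. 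By \eqref{eq:Mneigenfunc} this seed function has eigenvalue $\lambda_i=2\mu_i+1$, so defining $a_i=\lambda_i-\lambda_{i+1}$ as in \eqref{eq:alphaidef} recovers $a_i=2(\mu_i-\mu_{i+1})$, matching \eqref{eq:mu2alpha}.

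Next I would derive the chain equations for $i=0,\dots,p-2$ purely formally. Each $w_i$ solves the pair of Riccati equations \eqref{eq:Riccati} with potential $U_i=U_{M_i}$ and $\lambda_i=2\mu_i+1$; subtracting the second Riccati equation for $w_i$, namely $-w_i'+w_i^2=U_{i+1}-\lambda_i$, from the first Riccati equation for $w_{i+1}$, namely $w_{i+1}'+w_{i+1}^2=U_{i+1}-\lambda_{i+1}$, eliminates $U_{i+1}$ and yields $(w_i+w_{i+1})'+w_{i+1}^2-w_i^2=\lambda_i-\lambda_{i+1}=a_i$, which is \eqref{eq:wfchain}. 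Since each $w_i$ is a sum of a linear term and differences of logarithmic derivatives of the polynomials $\H_{M_i}$, it is manifestly a rational function, which gives the claimed \emph{rational} solution.

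The main obstacle is the cyclic closure at $i=p-1$, where one must simultaneously pin down the shift $\Delta=2k$ and the periodicity $w_p=w_0$. Extending the construction by $M_{p+1}=\phi_{\mu_p}(M_p)$ with $\mu_p=\mu_0+k$, the Riccati derivation above gives the equation $(w_{p-1}+w_p)'+w_p^2-w_{p-1}^2=a_{p-1}$ with $a_{p-1}=2(\mu_{p-1}-\mu_p)$; it therefore remains to prove $w_p=w_0$. Because $M_p=M_0+k$, Proposition~\ref{prop:equiv} gives $\widehat{\H}_{M_p}=\widehat{\H}_{M_0}$, so $\H_{M_p}$ and $\H_{M_0}$ differ only by a multiplicative constant and hence have identical second logarithmic derivatives, while $s_{M_p}=s_{M_0}+k$ by \eqref{eq:indexshift}; substituting into \eqref{eq:UMdef} gives $U_{M_p}-U_{M_0}=2(s_{M_p}-s_{M_0})=2k$, which is exactly \eqref{eq:shift} with $\Delta=2k$. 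The same proportionality forces $\H'_{M_{i+p}}/\H_{M_{i+p}}=\H'_{M_i}/\H_{M_i}$, and since $\mu_{i+p}=\mu_i+k$ has the same membership in $M_{i+p}=M_i+k$ as $\mu_i$ has in $M_i$, the signs satisfy $s_{i+p}=s_i$; therefore $w_{i+p}=w_i$, in particular $w_p=w_0$, and the last equation closes exactly like the others. As a final consistency check against \eqref{eq:Deltasumalpha}, the telescoping sum $\sum_{i=0}^{p-1}a_i=2(\mu_0-\mu_p)=-2k=-\Delta$ holds, confirming the shift value.
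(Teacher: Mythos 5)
Your proposal is correct and follows essentially the same route as the paper's (very terse) proof: identify each $w_i$ with the logarithmic derivative of the seed eigenfunction $\psi_{M_i,\mu_i}$ from \eqref{eq:seedfunc}, read off the eigenvalues $\lambda_i=2\mu_i+1$ from \eqref{eq:Mneigenfunc} so that $a_i=\lambda_i-\lambda_{i+1}$ as in \eqref{eq:alphaidef}, and obtain the chain equations from the Riccati pair \eqref{eq:Riccati}. Your explicit verification of the wrap-around step $w_p=w_0$ via Proposition~\ref{prop:equiv} and the index shift \eqref{eq:indexshift}, together with the sign bookkeeping $s_i$, is precisely the detail the paper's two-line proof leaves implicit.
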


\begin{proof}
The result follows from the structure of the seed eigenfunctions \eqref{eq:seedfunc} with eigenvalues given by \eqref{eq:Mneigenfunc}, after applying \eqref{eq:Lipsii} and \eqref{eq:alphaidef}.
The sign of $s_{i}$ indicates whether the $(i+1)$-th step of the chain that takes $\cL_i$ to $\cL_{i+1}$ is a state-adding $(+1)$ or state-deleting $(-1)$ transformation.
\end{proof}

The remaining part of the construction is
to classify cyclic Maya diagrams for any given (odd) period, which we
tackle next.  Under the correspondence described by Proposition
\ref{prop:Mwcorrespondence}, the reversal symmetry
\eqref{eq:reversal2} manifests as the transformation
\[ (M_0,\ldots, M_p) \mapsto (M_p,\ldots, M_0),\quad (\mu_1,\ldots,
\mu_{p}) \mapsto (\mu_{p},\ldots, \mu_1),\quad k\mapsto -k.\]
In light of the above remark, there is no loss of generality if we
restrict our attention to cyclic Maya diagrams with a positive shift
$k>0$.

\section{Classification of cyclic Maya diagrams}\label{sec:Mayacycles}

In this section we introduce the key concepts of \textit{genus} and
\textit{interlacing} to achieve a full classification of cyclic Maya
diagrams.



For $\bbeta\in \cZ_{2g+1}$ define the Maya diagram
\begin{equation}
  \label{eq:MBi}
  \Xi(\bbeta)= (-\infty,\beta_0) \cup [\beta_1,\beta_2) \cup
  \ \cdots \cup [\beta_{2g-1},\beta_{2g})
\end{equation}
where
\[ [m,n) = \{ j\in \Z \colon m\leq j < n\}\]
and where $\beta_0<\beta_1<\cdots < \beta_{2g}$ is the strictly increasing
enumeration of $\bbeta$.

\begin{prop}
  Every Maya diagram $M\in \cM$ has a unique representation of the
  form $M=\Xi(\bbeta)$ where $\bbeta$ is a set of integers of odd
  cardinality $2g+1$.
\end{prop}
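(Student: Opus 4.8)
The plan is to read off $\bbeta$ directly from $M$ as the set of positions where membership in $M$ changes, and then to show that the defining tail conditions on a Maya diagram force this set to have odd cardinality. Concretely, for $n\in\Z$ let $\chi_M(n)=1$ if $n\in M$ and $\chi_M(n)=0$ otherwise, and define the transition set
\[
  \bbeta = \{\, n\in\Z \colon \chi_M(n)\neq \chi_M(n-1) \,\}.
\]
I claim that $\bbeta$, enumerated in increasing order as $\beta_0<\beta_1<\cdots$, is precisely the data appearing in \eqref{eq:MBi}, and that it is the only such data.

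First I would establish that $\bbeta$ is finite of odd cardinality. By the definition of a Maya diagram there are integers $N_-<N_+$ with $n\in M$ for all $n<N_-$ and $n\notin M$ for all $n\geq N_+$; hence $\chi_M$ is constant on $(-\infty,N_-)$ and on $[N_+,\infty)$, so every transition lies in the finite window $[N_-,N_+]$ and $\bbeta$ is finite. Writing $r=|\bbeta|$ and listing $\beta_0<\cdots<\beta_{r-1}$, the function $\chi_M$ is constant on each of the $r+1$ blocks $(-\infty,\beta_0),[\beta_0,\beta_1),\ldots,[\beta_{r-1},\infty)$ and, by the definition of $\bbeta$, changes value across each $\beta_i$; therefore the values on consecutive blocks alternate. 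The leftmost block has value $1$ (filled) and the rightmost block has value $0$ (empty), so the alternating string of block-values is $1,0,1,0,\ldots,0$, which starts with $1$ and ends with $0$. An alternating string beginning with $1$ and ending with $0$ has an even number of blocks, hence an odd number $r=2g+1$ of transitions. This proves the parity claim, and taking the union of the filled blocks gives exactly $M=(-\infty,\beta_0)\cup[\beta_1,\beta_2)\cup\cdots\cup[\beta_{2g-1},\beta_{2g})=\Xi(\bbeta)$, which is the existence half.

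For uniqueness I would run the argument in reverse: if $M=\Xi(\bbeta')$ for some strictly increasing $\beta'_0<\cdots<\beta'_{2g'}$, then because the $\beta'_i$ are distinct every filled block $[\beta'_{2i-1},\beta'_{2i})$ and every gap $[\beta'_{2i},\beta'_{2i+1})$ is nonempty, so $\chi_M$ genuinely changes value at each $\beta'_i$ and at no other integer; that is, the transition set of $\Xi(\bbeta')$ is exactly $\bbeta'$. Comparing with the previous paragraph forces $\bbeta'=\bbeta$, giving uniqueness. The only delicate point in the whole argument is the parity count, since the odd cardinality is not an arbitrary normalization but a direct consequence of the asymmetry between the two tails (filled on the left, empty on the right); everything else is routine bookkeeping about a piecewise-constant $\{0,1\}$-valued function whose set of transitions is finite.
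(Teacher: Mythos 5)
Your proof is correct. The paper in fact offers no formal proof of this proposition: it is stated as immediate from the definition of $\Xi$, with the subsequent remark and Figure~\ref{fig:genusM} supplying only the visual intuition that a Maya diagram, after its infinite filled left tail and infinite empty right tail are removed, consists of alternating empty and filled segments. Your transition-set argument---finiteness of the set of membership changes, the parity count forced by the asymmetry of the two tails, and the observation that strict monotonicity of $\bbeta'$ makes every block nonempty so that the transition set of $\Xi(\bbeta')$ is exactly $\bbeta'$---is a faithful rigorous formalization of precisely that picture, so it completes what the paper treats as self-evident rather than taking a different route.
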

\begin{definition}\label{def:genus}
  We call the integer $g\geq 0$ the genus of $M= \Xi(\bbeta)$ and
  $(\beta_0,\beta_1,\ldots, \beta_{2g})$ the block coordinates of $M$.
\end{definition}
\noindent

\begin{prop}
  \label{prop:1cyclic}
  Let $M = \Xi(\bbeta)$ be a Maya diagram specified by its block
  coordinates.   We then have
  \[ \bbeta = \Upsilon(M,M+1).\]
\end{prop}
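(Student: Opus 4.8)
The plan is to characterize the symmetric difference $\Upsilon(M,M+1)$ as the set of positions at which the membership pattern of $M$ changes, and then to read off those positions directly from the block decomposition \eqref{eq:MBi}. Write $\chi_M(j)=1$ if $j\in M$ and $\chi_M(j)=0$ otherwise. The key elementary observation is that $j\in M+1$ if and only if $j-1\in M$, so that $j$ belongs to exactly one of $M$ and $M+1$ precisely when $\chi_M(j)\neq\chi_M(j-1)$. Hence
\[ \Upsilon(M,M+1)=\{\, j\in\Z \colon \chi_M(j)\neq \chi_M(j-1)\,\}, \]
that is, $\Upsilon(M,M+1)$ is exactly the set of left endpoints of the maximal constant runs of the indicator function $\chi_M$.

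First I would record the alternating structure of $\chi_M$ implied by \eqref{eq:MBi}. The defining decomposition partitions $\Z$ into the $2g+2$ consecutive intervals $(-\infty,\beta_0)$, $[\beta_0,\beta_1)$, $[\beta_1,\beta_2)$, $\ldots$, $[\beta_{2g-1},\beta_{2g})$, $[\beta_{2g},+\infty)$, on which $\chi_M$ takes the alternating values $1,0,1,0,\ldots,1,0$: the filled intervals are precisely $(-\infty,\beta_0)$ together with the blocks $[\beta_{2i-1},\beta_{2i})$, and the remaining intervals are empty. Since $\beta_0<\beta_1<\cdots<\beta_{2g}$ strictly, every one of these intervals is nonempty, so no constant run is vacuous and the value of $\chi_M$ genuinely alternates from one interval to the next.

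Next I would evaluate the membership-change condition at an arbitrary integer $j$. If $j=\beta_i$ for some $i$, then $j$ is the left endpoint of one interval while $j-1$ lies in the preceding interval (using $\beta_{i-1}<\beta_i$, or $j<\beta_0$ when $i=0$), and these two adjacent intervals carry opposite values of $\chi_M$; hence $\chi_M(\beta_i)\neq\chi_M(\beta_i-1)$ and $\beta_i\in\Upsilon(M,M+1)$. Conversely, if $j\notin\bbeta$, then $j$ lies strictly inside one of the intervals, so $j-1$ lies in the same interval and $\chi_M(j)=\chi_M(j-1)$, whence $j\notin\Upsilon(M,M+1)$. Combining the two directions yields $\Upsilon(M,M+1)=\{\beta_0,\ldots,\beta_{2g}\}=\bbeta$, as claimed.

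The argument is essentially bookkeeping, and the only point requiring care—the mild obstacle—is the treatment of the two semi-infinite end intervals and the verification that strict monotonicity of the $\beta_i$ prevents any interval from collapsing, which is exactly what guarantees that $\chi_M$ changes value at each $\beta_i$ and nowhere else.
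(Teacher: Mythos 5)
Your proof is correct and takes essentially the same route as the paper's: both are direct bookkeeping from the interval decomposition \eqref{eq:MBi}. The paper computes $M+1$ explicitly as $(-\infty,\beta_0]\cup(\beta_1,\beta_2]\cup\cdots\cup(\beta_{2g-1},\beta_{2g}]$ and reads off the two set differences (incidentally separating the even-indexed coordinates, which are gained, from the odd-indexed ones, which are lost), whereas you repackage the symmetric difference $\Upsilon(M,M+1)$ as the set of change points of the indicator function $\chi_M$ via the observation $j\in M+1 \Leftrightarrow j-1\in M$; the two computations are equivalent, and your treatment of the endpoint and nondegeneracy issues is sound.
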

\begin{proof}
  Observe that
  \[ M+1 = (-\infty, \beta_0] \cup (\beta_1,\beta_2] \cup \cdots \cup
  (\beta_{2g-1}, \beta_{2g}],\]
  where
  \[ (m,n] = \{ j\in \Z \colon m<j\leq n \}.\]
  It follows that
  \begin{align*}
    (M+1)\setminus M &= \{ \beta_0, \ldots, \beta_{2g} \}\\
    M\setminus (M+1) &= \{ \beta_1,\ldots, \beta_{2g-1} \}.
  \end{align*}
  The desired conclusion follows immediately.
\end{proof}


Let $\cM_g$ denote the set of Maya diagrams of genus $g$.  The above
discussion may be summarized by saying that the mapping \eqref{eq:MBi}
defines a bijection $\Xi:\cZ_{2g+1} \to \cM_g$, and that the block
coordinates are precisely the flip sites required for a translation
$M\mapsto M+1$.

\begin{remark}
  To motivate Definition \ref{def:genus}, it is perhaps more illustrative to
  understand the visual meaning of the genus of $M$, see
  Figure~\ref{fig:genusM}.  After removal of the initial infinite
  $\boxdot$ segment and the trailing infinite $\emptybox$ segment, a
  Maya diagram consists of alternating empty $\emptybox$ and filled
  $\boxdot$ segments of variable length.  The genus $g$ counts the
  number of such pairs.  
  The even block coordinates $\beta_{2i}$ indicate the starting
  positions of the empty segments, and the odd block coordinates
  $\beta_{2i+1}$ indicated the starting positions of the filled
  segments.  Also, note that $M$ is in standard form if and only if
  $\beta_0=0$.
\end{remark}

\begin{figure}[h]
\begin{tikzpicture}[scale=0.6]

\draw  (1,1) grid +(15 ,1);

\path [fill] (0.5,1.5) node {\huge ...} 
++(1,0) circle (5pt) ++(1,0) circle (5pt)  ++(1,0) circle (5pt) 
++(1,0) circle (5pt) ++(1,0) circle (5pt) 
++(2,0) circle (5pt) ++(1,0) circle (5pt) 
++ (3,0) circle (5pt)  ++(1,0) circle (5pt)   ++ (1,0) circle (5pt) 
++ (3,0) node {\huge ...} +(1,0) node[anchor=west] { $M =  (-\infty,\beta_0)\cup [ \beta_1,\beta_2) \cup [ \beta_3,\beta_4)$};

\draw[line width=1pt] (4,1) -- ++ (0,1.5);

\foreach \x  in {-3,...,11} 	\draw (\x+4.5,2.5)  node {$\x$};
\path (6.5,0.5) node {$\beta_0$} ++ (1,0) node {$\beta_1$}
++ (2,0) node {$\beta_2$}++ (2,0) node {$\beta_3$}++ (3,0) node {$\beta_4$}
;
\end{tikzpicture}

\caption{Block coordinates $(\beta_0,\ldots, \beta_4) = (2,3,5,7,10)$
  of a genus $2$ Maya diagrams.  Note that the genus is both the
  number of finite-size empty blocks and the number of finite-size
  filled blocks.}\label{fig:genusM}
\end{figure}

The next concept we need to introduce is the interlacing and modular
decomposition. 
\begin{definition}\label{def:interlacing}
  Fix a $k\in \N$ and let $M^{(0)}, M^{(1)},\ldots M^{(k-1)}\subset \Z$ be sets
  of integers.  We define the interlacing of these to be the set
  \begin{equation}\label{eq:interlacing} \Theta\left(M^{(0)}, M^{(1)},\ldots M^{(k-1)}\right)
    = \bigcup_{i=0}^{k-1} (k M^{(i)} +i),
 \end{equation}
 where
 \[ kM +j = \{ km + j \colon m\in M \},\quad M\subset \Z.\]
 Dually, given a set of integers $M\subset \Z$ and a $k\in \N$ define
 the sets
 \[ M^{(i)} = \{ m\in \Z \colon km+i \in M\},\quad i=0,1,\ldots, k-1.\]
 We will call the $k$-tuple of sets $\left(M^{(0)}, M^{(1)},\ldots M^{(k-1)}\right)$ the
 $k$-modular decomposition of $M$.
\end{definition}


The following result follows directly from the above definitions.
\begin{prop}
  We have $M=\Theta\left(M^{(0)}, M^{(1)},\ldots M^{(k-1)}\right)$ if and only if
  $\left(M^{(0)}, M^{(1)},\ldots M^{(k-1)}\right)$ is the $k$-modular decomposition of $M$.
\end{prop}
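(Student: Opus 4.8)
The plan is to prove both implications of the stated equivalence by tracking membership of an arbitrary integer $m\in\Z$ through the two constructions in Definition~\ref{def:interlacing}. The key observation is that the map sending $m$ to the unique pair $(a,i)$ with $m = ka+i$, $0\le i<k$, is a bijection $\Z\to\Z\times\{0,\ldots,k-1\}$ (Euclidean division with remainder), and both the interlacing operator $\Theta$ and the modular decomposition are defined entirely in terms of this bijection. So the entire proof reduces to unwinding definitions along this correspondence.

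First I would prove the forward direction: assume $M = \Theta(M^{(0)},\ldots,M^{(k-1)})$, i.e. $M = \bigcup_{i=0}^{k-1}(kM^{(i)}+i)$, and show that $(M^{(0)},\ldots,M^{(k-1)})$ is the $k$-modular decomposition of $M$. By the definition of the decomposition I must show, for each fixed $i$, that $\{a\in\Z : ka+i\in M\} = M^{(i)}$. Writing an arbitrary integer in $M$ uniquely as $ka+i$ with $0\le i<k$, the union is disjoint by uniqueness of the remainder, so $ka+i\in M$ holds precisely when $ka+i\in kM^{(i)}+i$, which is equivalent to $a\in M^{(i)}$. This gives the desired equality of sets. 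The converse direction runs the same computation backwards: assuming $(M^{(0)},\ldots,M^{(k-1)})$ is the $k$-modular decomposition, so $M^{(i)}=\{a : ka+i\in M\}$, I would verify $\Theta(M^{(0)},\ldots,M^{(k-1)})=M$ by showing each $m$ lies in the interlacing iff $m\in M$, again using that $m$ determines a unique remainder $i$.

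The one genuinely substantive point, rather than pure bookkeeping, is the \emph{disjointness} of the union $\bigcup_i(kM^{(i)}+i)$, since the definition of $\Theta$ writes it as an ordinary union but correctness of the decomposition requires that distinct values of $i$ contribute disjoint pieces. This is immediate from the fact that elements of $kM^{(i)}+i$ are exactly the integers congruent to $i$ modulo $k$, so the blocks partition $\Z$ by residue class; I would state this explicitly as the crux of the argument. One should also keep in mind that $M$ here is treated as an arbitrary subset of $\Z$ (the proposition is a purely set-theoretic statement, phrased for general integer sets before specialization to Maya diagrams), so no finiteness or Maya-diagram structure is needed.

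Given that the whole statement is an ``if and only if'' equating a construction with its inverse, I expect no real obstacle: the main subtlety is simply to present the residue-class partition cleanly and to apply uniqueness of Euclidean division consistently in both directions. The proof will be short, and the phrase ``follows directly from the above definitions'' in the excerpt already signals that the authors regard it as routine.
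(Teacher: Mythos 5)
Your proof is correct and is exactly the routine definitional unwinding the paper has in mind: the paper offers no written proof at all (it states the result ``follows directly from the above definitions''), and your argument---tracking each integer through the bijection $m \leftrightarrow (a,i)$ with $m=ka+i$, $0\le i<k$, and noting that the blocks $kM^{(i)}+i$ are disjoint by residue class---is the standard way to fill that in. No gaps; the emphasis on disjointness as the one substantive point is well placed.
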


Even though the above operations of interlacing and modular
decomposition apply to general sets, they have a well defined
restriction to Maya diagrams.  Indeed, it is not hard to check that if
$M=\Theta\left(M^{(0)}, M^{(1)},\ldots M^{(k-1)}\right)$ and $M$ is a Maya diagram, then
 $M^{(0)}, M^{(1)},\ldots M^{(k-1)}$ are also Maya diagrams.  Conversely, if the latter are all Maya
diagrams, then so is $M$.  Another important case concerns the
interlacing of finite sets.  The definition \eqref{eq:interlacing}
implies directly that if $\bmu^{(i)} \in \cZ_{p_i},\; i=0,1,\ldots, k-1$
then
\[ \bmu = \Theta\left(\bmu^{(0)}, \ldots , \bmu^{(k-1)}\right) \]
is a finite set of cardinality $p=p_0+\cdots + p_{k-1}$.

Visually, each of the $k$ Maya diagrams is dilated by a factor of $k$,
shifted by one unit with respect to the previous one and superimposed,
so the interlaced Maya diagram incorporates the information from
$M^{(0)}, \ldots M^{(k-1)}$ in $k$ different modular classes. An example can
be seen in Figure~\ref{fig:interlacing}. In other words, the
interlaced Maya diagram is built by copying sequentially a filled or
empty box as determined by each of the $k$ Maya diagrams.

\begin{figure}[h]
\begin{tikzpicture}[scale=0.6]

\draw  (1,3) grid +(11 ,1);

\path [fill,color=black] (0.5,3.5) node {\huge ...} 
++(1,0) circle (5pt) ++(1,0) circle (5pt)  ++(1,0) circle (5pt) 
++(1,0) circle (5pt)
++(2,0) circle (5pt) ++(1,0) circle (5pt)  ++(1,0) circle (5pt) 
++ (4,0) node {\huge ...} +(1,0) node[anchor=west,color=black] { $M_0 = \Xi(0,1,4),\;\qquad\quad\,\,\,\, g_0 = 1$}; 

\draw[line width=1pt] (5,3) -- ++ (0,2);

\foreach \x in {-4,...,6} 	\draw (\x+5.5,4.5)  node {$\x$};

\draw  (1,1) grid +(11 ,1);

\path [fill,color=blue] (0.5,1.5) node {\huge ...} 
++(1,0) circle (5pt) ++(1,0) circle (5pt)  ++(1,0) circle (5pt) 
++(3,0) circle (5pt) ++(1,0) circle (5pt) 
++ (3,0) circle (5pt)  ++(2,0) node {\huge ...} +(1,0) node[anchor=west,color=black] { $M_1 = \Xi(-1,1,3,5,6),\;\quad g_1 = 2$}; 

\draw[line width=1pt] (5,1) -- ++ (0,2);

\draw  (1,-1) grid +(11 ,1);

\path [fill,color=red] (0.5,-0.5) node {\huge ...} 
++(1,0) circle (5pt) ++(1,0) circle (5pt)  ++(1,0) circle (5pt) 
++(1,0) circle (5pt) ++(1,0) circle (5pt)  ++(1,0) circle (5pt) 
++(1,0) circle (5pt)  ++(1,0) circle (5pt) 
++(4,0) node {\huge ...} +(1,0) node[anchor=west,color=black] { $M_2 = \Xi(4),\qquad\qquad g_2 = 0$}; 

\draw[line width=1pt] (5,-1) -- ++ (0,2);

\draw  (0,-4) grid +(23 ,1);
\foreach \x in {-5,...,17} 	\draw (\x+5.5,2.5-5)  node {$\x$};
\draw[line width=1pt] (5,-4) -- ++ (0,2);

\path [fill,color=black] (2.5,-3.5)   
circle (5pt) ++(6,0) circle (5pt)  
++(3,0) circle (5pt) ++(3,0) circle (5pt) ;

\path [fill,color=blue] 
(0.5,-3.5) circle (5pt) ++(9,0) circle (5pt)  
++(3,0) circle (5pt) ++(9,0) circle (5pt) ;

\path [fill,color=red] 
(1.5,-3.5) circle (5pt) ++(3,0) circle (5pt)  
++(3,0) circle (5pt) ++(3,0) circle (5pt)
++(3,0) circle (5pt) ++(3,0) circle (5pt) ;

\draw (3.5,-5) node[right] {$M= \Theta(M_0,M_1,M_2)=\Xi_3(0,1,4|-1,1,3,5,6|4) = \Xi(-2,-1,0,2,10,11,12,16,17)$}; 
\end{tikzpicture}

\caption{Interlacing of three Maya diagrams with genus $1,2$ and $0$ with block coordinates and \mbox{$3$-block} coordinates for the interlaced Maya diagram.}\label{fig:interlacing}
\end{figure}

\begin{remark}
Modular decomposition of Maya diagrams has been considered previously by Noumi in his book  \cite{noumi2004painleve} (see Proposition 7.12), although in a different context: that of studying the effect of B\"acklund transformations on the Maya diagrams. In the present context of dressing chains and rational solutions, Tsuda \cite{tsuda2005universal} has also employed the notation in \eqref{eq:interlacing} for the interlacing of $N$ genus-0 Maya diagrams, which correspond to $N$-reduced partitions. This particular family of rational solutions correspond to the signature class $(1,1,\dots,1)$ with the highest shift $k=2n+1$ (see Section~\ref{sec:A4} below), i.e. the generalization of Okamoto polynomials.
\end{remark}

Equipped with these notions of genus and interlacing, we are now ready
to state the main result for the classification of cyclic Maya
diagrams.


\begin{thm}
  \label{thm:Mp}
  Let $M=\Theta\left(M^{(0)}, M^{(1)},\ldots M^{(k-1)}\right)$ be the $k$-modular decomposition of a given Maya diagram $M$.  Let $g_i$ be the genus
  of $M^{(i)},\; i=0,1,\ldots, k-1$.  Then, $M$ is $(p,k)$-cyclic where
  \begin{equation}
    \label{eq:pgi}
    p = p_0+p_1+\cdots + p_{k-1},\qquad p_i = 2g_i + 1.
  \end{equation}
\end{thm}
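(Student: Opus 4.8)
The plan is to reduce the statement to a single cardinality count and then exploit the fact that a translation by $k$ acts on the $k$-modular decomposition in the simplest possible way, namely by translating each component by one unit. By the corollary immediately following Proposition~\ref{prop:muM1M2}, every Maya diagram $M$ is $(p,k)$-cyclic precisely for $p=\left|\Upsilon(M,M+k)\right|$; hence it suffices to prove that $\left|\Upsilon(M,M+k)\right| = \sum_{i=0}^{k-1}(2g_i+1)$.

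First I would establish the commutation identity $(M+k)^{(i)} = M^{(i)}+1$ relating the modular decompositions of $M$ and of $M+k$. This is immediate from Definition~\ref{def:interlacing}: an integer $n\equiv i \pmod{k}$ lies in $M+k$ if and only if $n-k\in M$, i.e.\ if and only if $(n-i)/k-1\in M^{(i)}$, i.e.\ if and only if $(n-i)/k\in M^{(i)}+1$. Thus translating $M$ by $k$ translates every modular component by $1$.

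Next I would split the symmetric difference according to residue classes modulo $k$. Since the block $kM^{(i)}+i$ occupies only the residue class $i$, and using the identity above, the symmetric difference decomposes as the disjoint union
\[ \Upsilon(M,M+k) = \bigcup_{i=0}^{k-1}\bigl(k\,\Upsilon(M^{(i)},M^{(i)}+1)+i\bigr). \]
Taking cardinalities yields $\left|\Upsilon(M,M+k)\right| = \sum_{i=0}^{k-1}\left|\Upsilon(M^{(i)},M^{(i)}+1)\right|$. To finish, I would apply Proposition~\ref{prop:1cyclic} to each component: writing $M^{(i)}=\Xi(\bbeta^{(i)})$ in block coordinates, that proposition gives $\Upsilon(M^{(i)},M^{(i)}+1)=\bbeta^{(i)}$, so each summand equals $|\bbeta^{(i)}| = 2g_i+1 = p_i$, and summing gives $p = p_0+\cdots+p_{k-1}$.

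The conceptual heart, and the only step that requires any thought, is the commutation identity $(M+k)^{(i)}=M^{(i)}+1$: it is what converts a single global shift by $k$ into $k$ independent unit shifts, each of which is already controlled by the genus formula of Proposition~\ref{prop:1cyclic}. Everything else is bookkeeping across the disjoint residue classes. One small point to verify carefully is that the components $M^{(i)}$ are genuinely Maya diagrams, so that Proposition~\ref{prop:1cyclic} applies and the genus $g_i$ is defined, but this was already noted in the discussion preceding the theorem.
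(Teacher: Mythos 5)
Your proof is correct, and it reaches the result by a different logical route than the paper. The paper's proof is constructive: it forms the interlaced flip set $\bmu = \Theta\left(\bbeta^{(0)},\ldots,\bbeta^{(k-1)}\right)$ from the block coordinates of the components and verifies directly that $\phi_\bmu(M) = M+k$, using the fact that a multi-flip of an interlacing is the interlacing of the componentwise multi-flips, together with your commutation identity in its dual form $\Theta\left(M^{(0)}+1,\ldots,M^{(k-1)}+1\right) = M+k$. You instead invoke the corollary to Proposition~\ref{prop:muM1M2}, which already guarantees that $M$ is $(p,k)$-cyclic with $p = \left|\Upsilon(M,M+k)\right|$, and thereby reduce the theorem to the cardinality count $\left|\Upsilon(M,M+k)\right| = \sum_{i}\left|\Upsilon\left(M^{(i)},M^{(i)}+1\right)\right| = \sum_i (2g_i+1)$, obtained by splitting the symmetric difference over residue classes. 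The two arguments compute the same object from opposite ends: your residue-class decomposition identifies $\Upsilon(M,M+k)$ as exactly the paper's set $\Theta\left(\bbeta^{(0)},\ldots,\bbeta^{(k-1)}\right)$. What the paper's construction buys is the explicit flip set itself, which is recycled immediately afterwards to define the $k$-canonical flip sequence and $k$-block coordinates used throughout Section~\ref{sec:A4}; what your version buys is economy, since existence of the multi-flip comes for free from the corollary and only counting remains. Both routes rest on the same two pillars: Proposition~\ref{prop:1cyclic} applied componentwise, and the observation that translation by $k$ acts as translation by $1$ on each modular component.
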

\begin{proof}
  Let $\bbeta^{(i)}=\Upsilon\left(M^{(i)},M^{(i)}+1\right) \in \cZ_{p_i}$ be the block
  coordinates of $M^{(i)},\; i=0,1,\ldots, k-1$.  Consider the interlacing
  $\bmu = \Theta\left(\bbeta^{(0)}, \ldots , \bbeta^{(k-1)}\right)$.  From
  Proposition \ref{prop:1cyclic} we have that,
  \[ \phi_{\bbeta^{(i)}}\left( M^{(i)}\right)= M^{(i)} +1.\]
  so it follows that
  \begin{align*}
    \phi_{\bmu}(M)&= \phi_{\Theta\left(\bbeta^{(0)}, \ldots , \bbeta^{(k-1)}\right)}\Theta\left( M^{(0)} , \ldots ,
                  M^{(k-1)}\right) \\   
    &= \Theta\left( \phi_{\bbeta^{(0)}}(M^{(0)}) , \ldots ,
                    \phi_{\bbeta^{(k-1)}}(M^{(k-1)})\right) \\
                  &= \Theta\left(M^{(0)}+1 , \ldots , M^{(k-1)} + 1\right) \\
                  &= \Theta\left(M^{(0)}, \ldots , M^{(k-1)}\right) + k \\
                  &= M+k.
  \end{align*}
  Therefore, $M$ is $(p,k)$ cyclic where the value of $p$ agrees with
  \eqref{eq:pgi}.
\end{proof}

Theorem~\ref{thm:Mp} sets the way to classify cyclic Maya diagrams for
any given period $p$.
\begin{cor}\label{cor:k}
  For a fixed period $p\in \N$, there exist $p$-cyclic Maya diagrams with
  shifts $k=p,p-2,\dots,\lfloor p/2\rfloor$, and no other positive shifts are  possible.
\end{cor}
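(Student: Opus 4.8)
The plan is to read off both the necessity and the sufficiency of the stated shifts directly from the classification in Theorem~\ref{thm:Mp}, which expresses the period in terms of the genera of the $k$-modular components as $p = \sum_{i=0}^{k-1}(2g_i+1) = k + 2\sum_{i=0}^{k-1} g_i$. Two elementary consequences of this identity carry all the weight: since each $g_i\ge 0$ one has $k\le p$, and since $p-k = 2\sum_i g_i$ is even, $k$ must have the same parity as $p$. These two facts already constrain $k$ to lie among $p, p-2, p-4,\ldots$, which is the descending-by-two list in the statement.

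First I would establish \emph{existence}. Given $p$ and any admissible $k$ (with $1\le k\le p$ and $k\equiv p \pmod 2$), set $s=(p-k)/2\ge 0$ and pick non-negative integers $g_0,\dots,g_{k-1}$ with $\sum_i g_i = s$ (for instance $g_0=s$ and the rest zero). Using the bijection $\Xi:\cZ_{2g+1}\to\cM_g$ of the previous section, choose a Maya diagram $M^{(i)}$ of genus $g_i$ for each $i$ and form the interlacing $M=\Theta\!\left(M^{(0)},\dots,M^{(k-1)}\right)$. By construction $\left(M^{(0)},\dots,M^{(k-1)}\right)$ is the $k$-modular decomposition of $M$, so Theorem~\ref{thm:Mp} yields that $M$ is $(p,k)$-cyclic, realizing every admissible shift.

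Next I would prove \emph{necessity}, that no other shift can occur. Suppose $M$ is $(p,k)$-cyclic for some $k>0$, so $\phi_{\bmu}(M)=M+k$ for some $\bmu\in\cZ_p$. By Proposition~\ref{prop:muM1M2} the multi-flip carrying $M$ to $M+k$ is unique and equals $\Upsilon(M,M+k)$; hence $p=|\Upsilon(M,M+k)|$ is forced by the pair $(M,k)$ alone. Passing to the $k$-modular decomposition $M=\Theta\!\left(M^{(0)},\dots,M^{(k-1)}\right)$ and recalling, as in the proof of Theorem~\ref{thm:Mp}, that $M+k=\Theta\!\left(M^{(0)}+1,\dots,M^{(k-1)}+1\right)$, interlacing commutes with the symmetric difference, so $\Upsilon(M,M+k)=\Theta\!\left(\bbeta^{(0)},\dots,\bbeta^{(k-1)}\right)$ with $\bbeta^{(i)}=\Upsilon\!\left(M^{(i)},M^{(i)}+1\right)$. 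Since the cardinality of an interlacing of finite sets is the sum of the cardinalities, and $|\bbeta^{(i)}|=2g_i+1$ by Proposition~\ref{prop:1cyclic}, we recover $p=k+2\sum_i g_i$, whence $k\le p$ and $k\equiv p\pmod 2$.

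The main obstacle is the necessity half, and specifically the step that forces $p=|\Upsilon(M,M+k)|$: a priori a Maya diagram might admit several multi-flips of different cardinalities realizing the same translation $M\mapsto M+k$, which would break the rigid link between $p$ and $k$. The uniqueness clause of Proposition~\ref{prop:muM1M2} is exactly what rules this out, collapsing the question to the single counting identity $p=k+2\sum_i g_i$; everything else is routine bookkeeping about the cardinalities of interlacings and the parity of $p-k$.
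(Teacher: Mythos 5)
Your proof is correct and takes essentially the same route as the paper, which offers no separate argument but presents the corollary as an immediate consequence of Theorem~\ref{thm:Mp}: the identity $p = k + 2\sum_i g_i$ forces $k\le p$ and $k\equiv p \pmod 2$ (with the uniqueness of the multi-flip, Proposition~\ref{prop:muM1M2}, pinning down $p$ in the necessity direction), while choosing genera summing to $(p-k)/2$ and interlacing realizes every such shift. One remark: what you actually prove --- and what Theorem~\ref{thm:Mp} actually yields --- is that the admissible shifts descend by two from $p$ down to $1$ or $2$ according to parity; the endpoint $\lfloor p/2\rfloor$ in the corollary as printed appears to be a misprint, since for instance Theorem~\ref{thm:A4} realizes shifts $k=1,3,5$ for $p=5$.
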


\begin{remark}
  The highest shift $k=p$ corresponds to the interlacing of $p$ trivial (genus
  0) Maya diagrams.
\end{remark}

We now introduce a combinatorial system for describing rational
solutions of $p$-cyclic factorization chains.  First, we require a
suitably generalized notion of block coordinates suitable for
describing $p$-cyclic Maya diagrams.
\begin{definition}
  Let $M=\Theta\left(M^{(0)},\ldots M^{(k-1)}\right)$ be a $k$-modular decomposition of a
  $(p,k)$ cyclic Maya diagram.  For $i=0,1,\ldots, k-1$ let
  $\bbeta^{(i)}= \left(\beta^{(i)}_{0}, \ldots,
    \beta^{(i)}_{p_i-1}\right)$
  be the block coordinates of $M^{(i)}$ enumerated in increasing order.
  In light of the fact that
  \[ M = \Theta\left(\Xi(\bbeta^{(0)}), \ldots , \Xi(\bbeta^{(k-1)})\right),\]
  we will refer to the concatenated sequence
  \begin{align*}
    (\beta_0,\beta_1,\ldots, \beta_{p-1}) 
    &= (\bbeta^{(0)} | \bbeta^{(1)} | \ldots |
      \bbeta^{(k-1)}) \\
    &=  \left( \beta^{(0)}_{0}, \ldots, \beta^{(0)}_{p_0-1} |
      \beta^{(1)}_{0}, \ldots, \beta^{(1)}_{p_1-1} 
      | \ldots | \beta^{(k-1)}_{0},\ldots, \beta^{(k-1)}_{p_{k-1}-1}\right) 
  \end{align*}
  as the $k$-block coordinates of $M$.  Formally, the correspondence
  between $k$-block coordinates and Maya diagram is described by the
  mapping
  \[ \Xi_k\colon \cZ_{2g_0+1} \times \cdots \times \cZ_{2g_{k-1}+1} \to \cM \]
  with action
  \[ \Xi_k \colon (\bbeta^{(0)} | \bbeta^{(1)} | \ldots |
      \bbeta^{(k-1)})\mapsto \Theta\left(\Xi(\bbeta^{(0)}), \ldots ,
    \Xi(\bbeta^{(k-1)})\right) \]
\end{definition}

\begin{definition}
  Fix a $k\in \N$. For $m\in \Z$ let $[m]_k\in \{ 0,1,\ldots, k-1 \}$
  denote the residue class of $m$ modulo division by $k$.  For
  $m,n \in \Z$ say that $m \preccurlyeq_k n$ if and only if
  \[ [m]_k < [n]_k,\quad \text{ or } \quad [m]_k = [n]_k\; \text{ and
  } m\leq n.\]
  In this way, the transitive, reflexive relation $\preccurlyeq_k$
  forms a total order on $\Z$.
\end{definition}

\begin{prop}
  Let $M$ be a $(p,k)$ cyclic Maya diagram.  There exists a unique
  $p$-tuple of integers $(\mu_0,\ldots, \mu_{p-1})$ strictly ordered
  relative to $\preccurlyeq_k$ such that
  \begin{equation}
    \label{eq:bmuMk}
    \phi_\bmu(M) = M+k 
  \end{equation}
\end{prop}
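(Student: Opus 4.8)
The plan is to reduce everything to the rigidity of the underlying flip set established in Proposition~\ref{prop:muM1M2}, and then to invoke the fact that $\preccurlyeq_k$ is a total order to pin down the enumeration uniquely.

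For existence, I would take $\bmu = \Upsilon(M,M+k)$, the symmetric difference of $M$ and $M+k$. Since $M$ is $(p,k)$ cyclic, this set has cardinality $p$, and Proposition~\ref{prop:muM1M2} gives $\phi_\bmu(M)=M+k$ directly. Because $\preccurlyeq_k$ is a total order on $\Z$ (as recorded in its defining definition), every finite subset admits a unique strictly increasing enumeration; applying this to $\bmu$ produces a $p$-tuple $(\mu_0,\ldots,\mu_{p-1})$ that is strictly ordered relative to $\preccurlyeq_k$ and satisfies \eqref{eq:bmuMk}.

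For uniqueness, suppose $(\mu_0,\ldots,\mu_{p-1})$ is any $\preccurlyeq_k$-strictly-increasing tuple with $\phi_\bmu(M)=M+k$. Strict monotonicity forces the entries to be pairwise distinct, so $\bmu=\{\mu_0,\ldots,\mu_{p-1}\}$ is a genuine $p$-element subset of $\Z$. The crucial observation is that the multi-flip $\phi_\bmu$ depends only on this underlying set and not on the ordering, since flips $\phi_m,\phi_n$ at distinct positions $m\neq n$ commute; thus $\phi_\bmu(M)=M+k$ is a statement about the \emph{set} $\bmu$ alone. Moreover, each $\phi_m$ is an involution, so $\phi_\bmu$ is an involution and $\phi_\bmu(M)=M+k$ automatically yields the companion relation $M=\phi_\bmu(M+k)$. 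Both hypotheses of Proposition~\ref{prop:muM1M2} (with $M'=M+k$) are therefore met, and its uniqueness clause forces $\bmu=\Upsilon(M,M+k)$. Since the strictly increasing $\preccurlyeq_k$-enumeration of a finite set is unique, the tuple $(\mu_0,\ldots,\mu_{p-1})$ is uniquely determined, completing the argument.

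The proof is essentially bookkeeping once Proposition~\ref{prop:muM1M2} is in hand: the only conceptual point is the clean separation of the combinatorial data into an \emph{unordered} flip set, which is rigidly fixed as the symmetric difference $\Upsilon(M,M+k)$, and an \emph{ordered} tuple, which is then recovered canonically via the total order $\preccurlyeq_k$. The main—though minor—obstacle is to confirm that distinct flips commute and that $\phi_\bmu$ is an involution, so that $\phi_\bmu$ is genuinely a function of the set and the two-sided hypothesis of Proposition~\ref{prop:muM1M2} is satisfied from the one-sided assumption \eqref{eq:bmuMk}.
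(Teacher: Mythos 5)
Your proof is correct, but it takes a genuinely different route from the paper. The paper's proof is constructive: it takes the $k$-block coordinates $(\bbeta^{(0)}|\cdots|\bbeta^{(k-1)})$ of $M$, sets $\bmu=\Theta\left(\bbeta^{(0)},\ldots,\bbeta^{(k-1)}\right)$ so that \eqref{eq:bmuMk} holds by the proof of Theorem~\ref{thm:Mp}, and then writes down the $\preccurlyeq_k$-ordered enumeration explicitly, namely $\left(k\beta^{(0)}_{0},\ldots,k\beta^{(0)}_{p_0-1},\,k\beta^{(1)}_{0}+1,\ldots,k\beta^{(1)}_{p_1-1}+1,\ldots,k\beta^{(k-1)}_{p_{k-1}-1}+k-1\right)$; uniqueness is left essentially implicit. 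You instead bypass the modular decomposition entirely and argue by rigidity: the underlying flip set must be the symmetric difference $\Upsilon(M,M+k)$ by Proposition~\ref{prop:muM1M2}, and the tuple is then the unique strictly increasing $\preccurlyeq_k$-enumeration of that set. Your observations that distinct flips commute, that $\phi_\bmu$ is therefore a function of the set alone, and that the involution property turns the one-sided hypothesis \eqref{eq:bmuMk} into the two-sided hypothesis of Proposition~\ref{prop:muM1M2}, are exactly what is needed and are correct. What each approach buys: yours supplies the uniqueness argument cleanly and with minimal machinery (indeed more completely than the paper does), while the paper's construction produces the explicit formula for the tuple in terms of block coordinates, which is what gets used downstream --- it is precisely this enumeration that is christened the $k$-canonical flip sequence and drives the examples of Section~\ref{sec:A4}. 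A fully self-contained account would combine the two: your argument for why the tuple is unique, plus the paper's formula identifying what it is.
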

\begin{proof}
  Let
  $(\beta_0,\ldots, \beta_{p-1}) =  (\bbeta^{(0)} | \bbeta^{(1)} | \ldots |   \bbeta^{(k-1)})$ be the $k$-block coordinates of $M$.
  Set \[ \bmu = \Theta\left(\bbeta^{(0)}, \ldots , \bbeta^{(k-1)}\right) \]
  so that \eqref{eq:bmuMk} holds by the proof to Theorem \ref{thm:Mp}.
  The desired enumeration of $\bmu$ is given by
  \[ (k \beta_0,\ldots, k \beta_{p-1}) + (0^{p_0}, 1^{p_1}, \ldots,
  (k-1)^{p_{k-1}}) \]
  where the exponents indicate repetition.  Explicitly,
  $(\mu_0,\ldots, \mu_{p-1})$ is given by
  \[ \left(k\beta^{(0)}_{0},\ldots, k\beta^{(0)}_{p_0-1}, k\beta^{(1)}_{0} + 1 ,\ldots,
  k\beta^{(1)}_{p_1-1} +1 , \ldots , k \beta^{(k-1)}_{0} + k-1, \ldots,
  k\beta^{(k-1)}_{p_{k-1}-1} + k-1 \right) .\]
\end{proof}
\begin{definition}
  In light of \eqref{eq:bmuMk} we will refer to the just defined tuple
  $(\mu_0,\mu_1,\ldots, \mu_{p-1})$ as the $k$-canonical flip sequence
  of $M$ and refer to the tuple $(p_0,p_1,\ldots, p_{k-1})$ as the
  $k$-signature of $M$.
\end{definition}

By Proposition \ref{prop:Mwcorrespondence} a rational solution of the
$p$-cyclic dressing chain requires a $(p,k)$ cyclic Maya diagram, and
an additional item data, namely a fixed ordering of the canonical
flip sequence.
We will specify such ordering  as
\[ \bmu_\bpi = (\mu_{\pi_0},\ldots, \mu_{\pi_{p-1}}) \]
where $\bpi=(\pi_0,\ldots, \pi_{p-1})$ is a permutation of
$(0,1,\ldots, p-1)$. With this notation, the chain of Maya diagrams
described in Proposition \ref{prop:Mwcorrespondence}
is generated as
\begin{equation}\label{eq:picycle}
 M_0 = M,\qquad M_{i+1} =  \phi_{\mu_{\pi_i}}(M_i),\qquad i=0,1,\ldots, p-1.
 \end{equation}
\begin{remark}\label{rem:normalization}
Using a translation it is possible to normalize $M$ so that
$\mu_{0} = 0$.  Using a cyclic permutation and it is possible to
normalize $\bpi$ so that $\pi_p=0$.  The net effect of these two
normalizations is to ensure  that
$M_0,M_1,\ldots, M_{p-1}$ have standard form.
\end{remark}

\begin{remark}
In the discussion so far we have imposed the hypothesis that the
sequence of flips that produces a translation $M\mapsto M+k$ does not
contain any repetitions.  However, in order to obtain a full
classification of rational solutions, it will be necessary to account
for degenerate chains which include multiple flips at the same site.

To that end it is necessary to modify Definition \ref{def:multiflip}
to allow $\bmu$ to be a multi-set\footnote{A multi-set is
  generalization of the concept of a set that allows for multiple
  instances for each of its elements.}, and to allow
$\mu_0,\mu_1,\ldots, \mu_{p-1}$ in \eqref{eq:MBi} to be merely a
non-decreasing sequence.  
This has the effect of permitting $\emptybox$ and $\boxdot$ segments of
zero length wherever $\mu_{i+1} = \mu_i$.  The $\Xi$-image of such a
non-decreasing sequence is not necessarily a Maya diagram of genus
$g$, but rather a Maya diagram whose genus is bounded above by $g$.

It is no longer possible to assert that there is a unique $\bmu$ such
that $\phi_\bmu(M) = M+k$, because it is possible to augment the
non-degenerate $\bmu = \Upsilon(M,M+k)$ with an arbitrary number of
pairs of flips at the same site to arrive at a degenerate $\bmu'$ such
that $\phi_{\bmu'}(M) = M+k$ also.  The rest of the theory remains
unchanged.
\end{remark}

\section{Rational solutions of $A_4$-\p}\label{sec:A4}

In this section we will put together all the results derived above in order to describe an effective way of labelling and constructing all the rational solutions to the $A_{2k}$-\p\ system based on cyclic dressing chains of rational extensions of the harmonic oscillator. We conjecture that the construction described below covers all rational solutions to such systems. As an illustrative example, we describe all rational solutions to the $A_4$-\p\ system, and we furnish examples in each signature class.

For odd $p$, in order to specify a Maya $p$-cycle, or equivalently a rational solution of a $p$-cyclic dressing chain, we need to specify three items of data:
\begin{enumerate}[(i)]
\item a signature sequence $(p_0,\ldots, p_{k-1})$ consisting of odd positive integers that sum to $p$. This sequence determines the genus of the $k$ interlaced Maya diagrams that give rise to a $(p,k)$-cyclic Maya diagram $M$. The possible values of $k$ are given by Corollary~\ref{cor:k}.
\item Once the signature is fixed, we need to specify the $k$-block coordinates \[(\beta_0,\dots,\beta_{p-1})=(\bbeta^{(0)}|\ldots| \bbeta^{(k-1)})\] where $\bbeta^{(i)}=(\beta^{(i)}_0,\dots,\beta^{(i)}_{p_i})$ are the block coordinates that define each of the interlaced Maya diagrams $M^{(i)}$. These two items of data specify uniquely a $(p,k)$-cyclic Maya diagram $M$, and a canonical flip sequence $\bmu=(\beta_0,\dots,\beta_{p-1})$ . The next item specifies a given $p$-cycle that contains $M$.
\item Once the $k$-block coordinates and canonical flip sequence $\bmu$ are fixed, we still have the freedom to choose a permutation  $\bpi\in S_p$ of $(0,1,\ldots, p-1)$ that specifies the actual flip sequence $\bmu_\bpi$, i.e. the order in which the flips in the canonical flip sequence are applied to build the Maya $p$-cycle.

\end{enumerate}

For any signature of a Maya $p$-cycle, we need to specify the $p$ integers in the canonical flip sequence, but following Remark~\ref{rem:normalization}, we can get rid of translation invariance by setting $\mu_0=\beta^{(0)}_0=0$, leaving only $p-1$ free integers. Moreover, we can restrict ourselves to permutations such that $\bpi_p=0$ in order to remove the invariance under cyclic permutations. The remaining number of degrees of freedom is $p-1$, which (perhaps not surprisingly) coincides with the number of generators of the symmetry group $A^{(1)}_{p-1}$. This is a strong indication that the class described above captures a generic orbit of a seed solution under the action of the symmetry group.


We now illustrate the general theory by describing the rational
solutions of the $A^{(1)}_4$- \p\ system, whose equations are given by

\begin{align}\label{eq:A4system}
f_0' + f_0(f_1-f_2+f_3-f_4) &= \a_0, \nonumber\\
f_1' + f_1(f_2-f_3+f_4-f_0) &= \a_1,\nonumber \\
f_2' + f_2(f_3-f_4+f_0-f_1) &= \a_2, \\
f_3' + f_3(f_4-f_0+f_1-f_2) &= \a_3, \nonumber \\
f_4' + f_4(f_0-f_1+f_2-f_3) &= \a_4,\nonumber 
\end{align}
with normalization conditions
\[ f_0+f_1+f_2+f_3+f_4=z,\qquad \a_0+\a_1+\a_2+\a_3+\a_4=1.\]
This system has the ``seed solutions"
\begin{align*}
(f_0,f_1,f_2,f_3,f_4)&=(z,0,0,0,0), && (\a_0,\a_1,\a_2,\a_3,\a_4)=(1,0,0,0,0),\\
(f_0,f_1,f_2,f_3,f_4)&=(\tfrac13z,\tfrac13z,\tfrac13z,0,0), && (\a_0,\a_1,\a_2,\a_3,\a_4)=(\tfrac13,\tfrac13,\tfrac13,0,0),\\
(f_0,f_1,f_2,f_3,f_4)&=(\tfrac15z,\tfrac15z,\tfrac15z,\tfrac15z,\tfrac15z), && (\a_0,\a_1,\a_2,\a_3,\a_4)=(\tfrac15,\tfrac15,\tfrac15,\tfrac15,\tfrac15),
\end{align*}
and permutations thereof.
\begin{thm}\label{thm:A4}
  Rational solutions of the $A^{(1)}_4$-\p\ system \eqref{eq:A4system} correspond to chains of $5$-cyclic
  Maya diagrams belonging to one of the following signature classes:
  \[ (5), (3,1,1), (1,3,1), (1,1,3), (1,1,1,1,1).\]
  With the normalization $\bpi_4=0$ and $\mu_0=0$, each rational
  solution may be uniquely labelled by one of the above signatures,
  a 4-tuple of arbitrary non-negative integers $(n_1, n_2, n_3, n_4)$, and  a
  permutation $(\pi_0, \pi_1,\pi_2, \pi_3)$ of $(1,2,3,4)$.  For
  each of the above signatures, the corresponding $k$-block coordinates
  of the initial $5$-cyclic Maya diagram are then given by
  \begin{align*}
&k=1 & &(5)& &(0,n_1,n_1+n_2, n_1+n_2+n_3, n_1+n_2+n_3+n_4) \\
&k=3 & &(3,1,1) && (0, n_1 , n_1+ n_2 | n_3 | n_4)\\
&k=3 & &(1,3,1) & &(0| n_1, n_1+n_2, n_1+n_2+n_3 | n_4)\\
&k=3 & &(1,1,3) & &(0| n_1 | n_2, n_2+n_3, n_2+n_3+n_4 )\\
&k=5 & &(1,1,1,1,1) && (0| n_1 | n_2| n_3| n_4)\\
  \end{align*}
\end{thm}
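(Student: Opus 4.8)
The plan is to combine the results established earlier in the excerpt into a precise bijective labelling scheme for $5$-cyclic Maya diagrams. The key observation is that Theorem~\ref{thm:A4} is essentially a specialization of Theorem~\ref{thm:Mp} and Corollary~\ref{cor:k} to the case $p=5$, so the proof is mostly a matter of organizing the combinatorial data. First I would invoke Corollary~\ref{cor:k}: for period $p=5$, the admissible shifts are $k=5,3,1$, corresponding (by $p_i = 2g_i+1$ and $\sum p_i = p$) to decompositions of $5$ into $k$ odd positive parts. For $k=1$ the only signature is $(5)$; for $k=3$ the compositions of $5$ into three odd parts are exactly $(3,1,1),(1,3,1),(1,1,3)$; for $k=5$ the only signature is $(1,1,1,1,1)$. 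This enumeration gives precisely the five signature classes listed.

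Next I would translate each signature into explicit $k$-block coordinates. Given a $k$-signature $(p_0,\ldots,p_{k-1})$ with $p_i = 2g_i+1$, Theorem~\ref{thm:Mp} tells us that $M = \Theta(M^{(0)},\ldots,M^{(k-1)})$ is $(p,k)$-cyclic, where each $M^{(i)}$ is a genus-$g_i$ Maya diagram determined by its block coordinates $\bbeta^{(i)} = (\beta^{(i)}_0,\ldots,\beta^{(i)}_{p_i-1})$. By the bijection $\Xi:\cZ_{2g+1}\to\cM_g$ established before Theorem~\ref{thm:Mp}, a genus-$g$ Maya diagram is freely specified by a strictly increasing tuple of $2g+1$ integers. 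To remove translation invariance I would apply Remark~\ref{rem:normalization}, normalizing so that $\mu_0 = \beta^{(0)}_0 = 0$. The remaining $p-1 = 4$ block coordinates within each modular component must be strictly increasing; writing the successive gaps as positive integers $n_1,n_2,n_3,n_4$ (and resetting to $0$ at the start of each new genus block, since the block coordinates of distinct $M^{(i)}$ are independent) reproduces exactly the five coordinate formulas displayed in the statement. For instance, in signature $(5)$ the single diagram $M^{(0)}$ has genus $2$ with block coordinates $(0,n_1,n_1+n_2,n_1+n_2+n_3,n_1+n_2+n_3+n_4)$, while in $(3,1,1)$ the components have genera $1,0,0$ giving $(0,n_1,n_1+n_2\,|\,n_3\,|\,n_4)$.

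Finally I would account for the choice of $p$-cycle through $M$. By Proposition~\ref{prop:Mwcorrespondence}, a rational solution requires not just the $(p,k)$-cyclic Maya diagram $M$ but also an ordering of the canonical flip sequence $\bmu$, specified by a permutation $\bpi$. Normalizing $\pi_4 = 0$ via the cyclic symmetry (again Remark~\ref{rem:normalization}) leaves a permutation $(\pi_0,\pi_1,\pi_2,\pi_3)$ of $(1,2,3,4)$, matching the statement. I would close by confirming uniqueness: distinct tuples $(n_1,n_2,n_3,n_4)$ yield distinct block coordinates hence distinct diagrams $M$ (injectivity of $\Xi_k$), and distinct admissible permutations yield distinct flip sequences hence distinct chains, so the labelling is a genuine bijection onto rational solutions arising from this construction.

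The main obstacle, and the step requiring the most care, is verifying that the five signatures genuinely exhaust all possibilities and that the gap-parametrization is both complete and non-redundant. The subtlety is that $k=3$ admits several orderings of the odd parts, and one must check that these orderings give genuinely inequivalent Maya diagrams rather than diagrams related by the reversal or cyclic symmetries already quotiented out; here the fact that interlacing respects the modular structure (so the position of a genus-$1$ block among genus-$0$ blocks is an invariant) guarantees the three signatures $(3,1,1),(1,3,1),(1,1,3)$ are distinct. One must also confirm that the normalization $\beta^{(0)}_0=0$ together with strict ordering within each block leaves exactly the claimed number of free parameters, which is where the count $p-1=4$ and its agreement with the rank of $A^{(1)}_4$ serves as a useful consistency check.
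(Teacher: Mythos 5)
Your overall route --- specializing Corollary~\ref{cor:k}, Theorem~\ref{thm:Mp}, Remark~\ref{rem:normalization} and Proposition~\ref{prop:Mwcorrespondence} to $p=5$ --- is exactly how the paper arrives at Theorem~\ref{thm:A4} (the paper gives no separate proof of it), and your enumeration of the shifts $k\in\{1,3,5\}$, of the five odd-part compositions of $5$, and of the $4!$ permutations with $\pi_4=0$ is correct. There are, however, two genuine gaps. First, you parametrize each class by \emph{positive} gaps, requiring block coordinates to be strictly increasing, while the theorem asserts that the $n_i$ are arbitrary \emph{non-negative} integers, and the difference is not cosmetic. Block coordinates belonging to \emph{different} modular components are not ordered relative to one another, so nothing forces $n_3>0$ or $n_4>0$ in the class $(3,1,1)$, nor any $n_i>0$ in $(1,1,1,1,1)$: for instance $\Xi_5(0|0|0|0|0)=\Xi(0)$ is a perfectly non-degenerate $(5,5)$-cyclic Maya diagram, and its cycle produces the Okamoto-type seed solution, which your parametrization excludes. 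A zero gap \emph{within} a component is also allowed by the theorem: it produces the degenerate chains with repeated flips that the paper handles by passing to multisets (see the remark closing Section~\ref{sec:Mayacycles} and the example of Figure~\ref{fig:51cyclicdegen}, which has $(n_1,n_2,n_3,n_4)=(1,1,2,0)$). Your proof therefore labels strictly fewer solutions than the statement claims; moreover, once zeros are admitted, your closing assertion that distinct permutations give distinct chains needs care, since permuting two equal entries of a degenerate flip sequence yields the same chain.

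Second, your argument for the distinctness of the classes $(3,1,1)$, $(1,3,1)$, $(1,1,3)$ is incorrect as stated. A rational solution determines its Maya cycle only up to translation: by Proposition~\ref{prop:equiv}, the data $w_i$, $a_i=2(\mu_i-\mu_{i+1})$ of \eqref{eq:HM2w}--\eqref{eq:mu2alpha} are unchanged under $M_i\mapsto M_i+j$. But translation by $1$ does \emph{not} fix the modular components: it permutes them cyclically, $(M^{(0)},M^{(1)},M^{(2)})\mapsto (M^{(2)}+1,M^{(0)},M^{(1)})$, so the signature itself is cycled $(3,1,1)\to(1,3,1)\to(1,1,3)$; the ``position of the genus-$1$ block'' is \emph{not} an invariant of the solution. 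What rescues uniqueness is the normalization: one must prove that every translation orbit contains \emph{exactly one} diagram with $\beta^{(0)}_0=0$ and all block coordinates non-negative (if $M$ is so normalized, then for $j>0$ the translate $M+j$ has positive first coordinate, while $M-j$ acquires a negative coordinate). The signature of that unique representative is then a well-defined invariant of the solution; this is what separates the three $k=3$ classes and makes the labelling injective, and it is also what forces the inter-component coordinates to be non-negative rather than arbitrary integers. This lemma is missing from your proof, and without it the uniqueness half of the theorem is not established.
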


We show specific examples  with shifts $k=1,3$ and $5$ and signatures $(5)$, $(1,1,3)$ and $(1,1,1,1,1)$.

\begin{example}\label{ex:51}
We construct a $(5,1)$-cyclic Maya diagram in the signature class $(5)$ by choosing $(n_1,n_2,n_3,n_4)=(2,3,1,1)$, which means that the first Maya diagram in the cycle is $M_0=\Xi(0,2,5,6,7)$, depicted in the first row of Figure~\ref{fig:51cyclic}.  
The canonical flip sequence is  $\bmu=(0,2,5,6,7)$.  We choose the permutation
  $(34210)$, which gives the chain of Maya diagrams shown in Figure
  \ref{fig:51cyclic}.  Note that the permutation specifies the sequence of block coordinates that get shifted by one at each step of the cycle. This type of solutions with signature $(5)$ were already studied in \cite{filipuk2008symmetric}, and they are based on a genus 2 generalization of the generalized Hermite polynomials that appear in the solution of \PIV ($A_2$-\p).
\begin{figure}[ht]
  \centering
\begin{tikzpicture}[scale=0.5]

  \path [fill] (0.5,6.5) ++(3,0)
  circle (5pt) ++(1,0)
  circle (5pt) ++ (1,0) circle (5pt)++ (2,0)
  circle (5pt) ++ (3,0)  node[anchor=west] {
    $M_0=\Xi\,(0,2,5,6,7)$};

  \path [fill] (0.5,5.5) ++(3,0) circle (5pt) ++(1,0) circle (5pt) ++
  (1,0) circle (5pt) ++ (5,0)  node[anchor=west] {
    $M_1=\Xi\,(0,2,5,7,7)$};

  \path [fill] (0.5,4.5) ++(3,0)
  circle (5pt) ++(1,0)
  circle (5pt) ++ (1,0) circle (5pt)++ (3,0)
  circle (5pt) ++ (2,0)  node[anchor=west] {
    $M_2=\Xi\,(0,2,5,7,8)$};

  \path [fill] (0.5,3.5) ++(3,0)
  circle (5pt) ++(1,0)
  circle (5pt) ++ (1,0) circle (5pt) ++(1,0) circle (5pt) ++ (2,0)
  circle (5pt) ++ (2,0)  node[anchor=west] {
    $M_3=\Xi\,(0,2,6,7,8)$};

  \path [fill] (0.5,2.5)  ++(4,0)
  circle (5pt) ++ (1,0) circle (5pt) ++(1,0) circle (5pt) ++ (2,0)
  circle (5pt) ++ (2,0)  node[anchor=west] {
    $M_4=\Xi\,(0,3,6,7,8)$};

  \path [fill] (0.5,1.5)  ++(1,0) circle (5pt) ++(3,0)
  circle (5pt) ++ (1,0) circle (5pt) ++(1,0) circle (5pt) ++ (2,0)
  circle (5pt) ++ (2,0)  node[anchor=west] {
    $M_5=\Xi\,(1,3,6,7,8)=M_0+1$};

  \path [fill] (0.5,1.5) ++(0,0) circle (5pt)
  ++(0,1) circle (5pt)
  ++(0,1) circle (5pt)
  ++(0,1) circle (5pt)
  ++(0,1) circle (5pt)
  ++(0,1) circle (5pt);

  \draw  (0,1) grid +(10 ,6);
  \draw[line width=2pt] (1,1) -- ++ (0,6);
  \draw[line width=2pt] (9,1) -- ++ (0,6);

  \foreach \x in {-1,...,8} \draw (\x+1.5,0.5)  node {$\x$};

\end{tikzpicture}
  
\caption{A Maya $5$-cycle with shift $k=1$ for the choice $(n_1,n_2,n_3,n_4)=(2,3,1,1)$ and permutation $\bpi=(34210)$. }
  \label{fig:51cyclic}
\end{figure}
\end{example}

We shall now provide the explicit construction of the rational solution to the $A_4$-\p\ system \eqref{eq:A4system}, by using Proposition~\ref{prop:Mwcorrespondence} and Proposition~\ref{prop:wtof}. The permutation $\bpi=(34210)$ on the canonical sequence $\bmu=(0,2,5,6,7)$ produces the flip sequence $\bmu_\bpi=(6,7,5,2,0)$, so that the values of the $a_i$ parameters given by \eqref{eq:mu2alpha} become 
$(a_0,a_1,a_2,a_3,a_4)=(-2,4,6,4,-14)$.
The pseudo-Wronskians corresponding to each Maya diagram in the cycle are ordinary Wronskians, which will always be the case with the normalization imposed in Remark~\ref{rem:normalization}. They read (see Figure~\ref{fig:51cyclic}):
\begin{align*}
\H_{M_0}(z)&=\Wr(H_2,H_3,H_4,H_6),\\
\H_{M_1}(z)&=\Wr(H_2,H_3,H_4),\\
\H_{M_2}(z)&=\Wr(H_2,H_3,H_4,H_7),\\
\H_{M_3}(z)&=\Wr(H_2,H_3,H_4,H_5,H_7),\\
\H_{M_4}(z)&=\Wr(H_3,H_4,H_5,H_7),\\
\end{align*}
where $H_n=H_n(z)$ is the $n\supth$ Hermite polynomial. Following Proposition~\ref{prop:Mwcorrespondence}, the rational solution to the dressing chain is given by the tuple $(w_0,w_1,w_2,w_3,w_4|a_0,a_1,a_2,a_3,a_4)$, where $a_i$ and $w_i$ are given by \eqref{eq:HM2w}--\eqref{eq:mu2alpha} as:
\begin{align*}
w_0(z)&=z+\ddz\Big[\log \H_{M_1}(z) - \log \H_{M_0}(z)\Big],&& a_0=-2,\\
w_1(z)&=-z+ \ddz\Big[\log \H_{M_2}(z)-\log \H_{M_1}(z)\Big],&& a_1=4,\\
w_2(z)&=-z+\ddz\Big[\log \H_{M_3}(z)- \log \H_{M_2}(z)\Big],&& a_2=6,\\
w_3(z)&=z+ \ddz\Big[\log \H_{M_4}(z) - \log \H_{M_3}(z)\Big],&& a_3=4,\\
w_4(z)&=-z+ \ddz\Big[\log \H_{M_0}(z) - \log \H_{M_4}(z)\Big],&& a_4=-14.
\end{align*}
Finally, Proposition~\ref{prop:wtof} implies that the corresponding rational solution to the $A_4$-\p\ system \eqref{eq:Ansystem} is given by the tuple $(f_0,f_1,f_2,f_3,f_4|\a_0,\a_1,\a_2,\a_3,\a_4)$, where
\begin{align*}
f_0(z)&= \ddz\Big[\log \H_{M_2}(c_1 z)- \log \H_{M_0}(c_1 z)\Big],&& \alpha_0=1,\\
f_1(z)&=z+\ddz\Big[\log \H_{M_3}(\cc{1}z) -\log \H_{M_1}(\cc{1}z)\Big],&& \alpha_1=-2,\\
f_2(z)&= \ddz\Big[\log \H_{M_4}(\cc{1}z) - \log \H_{M_2}(\cc{1}z)\Big],&& \alpha_2=-3,\\
f_3(z)&=\ddz\Big[\log \H_{M_0}(\cc{1}z) - \log \H_{M_3}(\cc{1}z)\Big],&& \alpha_3=-2,\\
f_4(z)&=\ddz\Big[\log \H_{M_1}(\cc{1}z) - \log \H_{M_4}(\cc{1}z)\Big],&& \alpha_4=7.
\end{align*}
with $\cc{1}^2=-\frac{1}{2}$.

\begin{example}
 We construct a degenerate example belonging to the $(5)$ signature class, by choosing 
 $(n_1,n_2,n_3,n_4)=(1,1,2,0)$. The presence of $n_4=0$ means that the first Maya diagram has genus 1 instead of the generic genus 2, with block coordinates given by $M_0=\Xi\,(0,1,2,4,4)$. The canonical flip sequence  $\bmu=(0,1,2,4,4)$ contains two flips at the same site, so it is not unique. Choosing the permutation $(42130)$ produces the chain of Maya  diagrams shown in Figure \ref{fig:51cyclicdegen}.  The explicit construction of the rational solutions follows the same steps as in the previous example, and we shall omit it here. It is worth noting, however, that due to the degenerate character of the chain, three linear combinations of $f_0,\dots,f_4$ will provide a solution to the lower rank $A_2$-\p. If the two flips at the same site are performed consecutively in the cycle, the embedding of $A_2^{(1)}$ into $A_4^{(1)}$ is trivial and corresponds to setting two consecutive $f_i$ to zero. This is not the case in this example, as the flip sequence is $\bmu_\bpi=(4,2,1,4,0)$, which produces a non-trivial embedding.
\begin{figure}[ht]
  \centering
\begin{tikzpicture}[scale=0.5]

  \path [fill] (0.5,6.5) 
  ++(2,0)circle (5pt)
  ++(5,0)  node[anchor=west] {  $M_0=\Xi\,(0,1,2,4,4)$};

  \path [fill] (0.5,5.5) 
  ++(2,0) circle (5pt) 
  ++(3,0) circle (5pt)
  ++ (2,0)  node[anchor=west] {$M_1=\Xi\,(0,1,2,4,5)$};

  \path [fill] (0.5,4.5) 
  ++(2,0) circle (5pt)
  ++(1,0) circle (5pt)
  ++(2,0) circle (5pt)
  ++(2,0)  node[anchor=west] {  $M_2=\Xi\,(0,1,3,4,5)$};

  \path [fill] (0.5,3.5) 
  ++(3,0) circle (5pt)
  ++(2,0) circle (5pt)
  ++(2,0)  node[anchor=west] {  $M_3=\Xi\,(0,2,3,4,5)$};

  \path [fill] (0.5,2.5) 
  ++(3,0) circle (5pt)
  ++(4,0)  node[anchor=west] {  $M_4=\Xi\,(0,2,3,5,5)$};

  \path [fill] (0.5,1.5) 
  ++(1,0) circle (5pt)
  ++(2,0) circle (5pt)
  ++(4,0)  node[anchor=west] {  $M_5=\Xi\,(1,2,3,5,5)=M_0+1$};

  \path [fill] (0.5,1.5) ++(0,0) circle (5pt)
  ++(0,1) circle (5pt)
  ++(0,1) circle (5pt)
  ++(0,1) circle (5pt)
  ++(0,1) circle (5pt)
  ++(0,1) circle (5pt);

  \draw  (0,1) grid +(7 ,6);
  \draw[line width=2pt] (1,1) -- ++ (0,6);
  \draw[line width=2pt] (6,1) -- ++ (0,6);

  \foreach \x in {-1,...,5} \draw (\x+1.5,0.5)  node {$\x$};

\end{tikzpicture}
  
  \caption{A degenerate  Maya $5$-cycle with $k=1$ for the choice  $(n_1,n_2,n_3,n_4)=(1,1,2,0)$ and permutation $\bpi=(42130)$.}
  \label{fig:51cyclicdegen}
\end{figure}
\end{example}

\begin{example}
We construct a $(5,3)$-cyclic Maya diagram in the signature class $(1,1,3)$ by choosing $(n_1,n_2,n_3,n_4)=(3,1,1,2)$, which means that the first Maya diagram has $3$-block coordinates $(0|3|1,2,4)$. The canonical flip sequence is given by $\bmu=\Theta\,(0|3|1,2,4)=(0, {\color{red}10},{\color{blue} 5,8,14})$.
The permutation $(41230)$ gives the chain of Maya diagrams shown in Figure
  \ref{fig:53cyclic}. Note that, as in Example~\ref{ex:51}, the permutation specifies the order in which the $3$-block coordinates are shifted by +1 in the subsequent steps of the cycle. This type of solutions in the signature class $(1,1,3)$ were not given in \cite{filipuk2008symmetric}, and they are new to the best of our knowledge.

\begin{figure}[ht]
  \centering
\begin{tikzpicture}[scale=0.5]
  \path [fill,red] (0.5,6.5)  
  ++(2,0) circle (5pt)
  ++(3,0) circle (5pt)
  ++(3,0) circle (5pt) ;

  \path [fill,blue] (0.5,6.5)  
  ++(3,0) circle (5pt)
  ++(6,0) circle (5pt) 
  ++(3,0) circle (5pt) ;

  \path (17.5,6.5)  node[anchor=west] {$M_0=\Xi_3\,(0|3|1,2,4)$};

  \path [fill,red] (0.5,5.5)  
  ++(2,0) circle (5pt)
  ++(3,0) circle (5pt)
  ++(3,0) circle (5pt) ;

  \path [fill,blue] (0.5,5.5)  
  ++(3,0) circle (5pt)
  ++(6,0) circle (5pt) 
  ++(3,0) circle (5pt) 
  ++(3,0) circle (5pt) ;
  \path (17.5,5.5)  node[anchor=west] {$M_1=\Xi_3\,(0|3|1,2,5)$};

  \path [fill,red] (0.5,4.5)  
  ++(2,0) circle (5pt)
  ++(3,0) circle (5pt)
  ++(3,0) circle (5pt)
  ++(3,0) circle (5pt) ;

  \path [fill,blue] (0.5,4.5)  
  ++(3,0) circle (5pt)
  ++(6,0) circle (5pt) 
  ++(3,0) circle (5pt) 
  ++(3,0) circle (5pt) ;
  \path (17.5,4.5)  node[anchor=west] {$M_2=\Xi_3\,(0|4|1,2,5)$};

  \path [fill,red] (0.5,3.5)  
  ++(2,0) circle (5pt)
  ++(3,0) circle (5pt)
  ++(3,0) circle (5pt)
  ++(3,0) circle (5pt) ;

  \path [fill,blue] (0.5,3.5)  
  ++(3,0) circle (5pt)
  ++(3,0) circle (5pt)
  ++(3,0) circle (5pt) 
  ++(3,0) circle (5pt) 
  ++(3,0) circle (5pt) ;
  \path (17.5,3.5)  node[anchor=west] {$M_3=\Xi_3\,(0|4|2,2,5)$};

  \path [fill,red] (0.5,2.5)  
  ++(2,0) circle (5pt)
  ++(3,0) circle (5pt)
  ++(3,0) circle (5pt)
  ++(3,0) circle (5pt) ;
  \path [fill,blue] (0.5,2.5)  
  ++(3,0) circle (5pt)
  ++(3,0) circle (5pt)
  ++(6,0) circle (5pt) 
  ++(3,0) circle (5pt) ;
  \path (17.5,2.5)  node[anchor=west] {$M_4=\Xi_3\,(0|4|2,3,5)$};

  \path [fill,black] (0.5,1.5)  
  ++(1,0) circle (5pt);  
  \path [fill,red] (0.5,1.5)  
  ++(2,0) circle (5pt)
  ++(3,0) circle (5pt)
  ++(3,0) circle (5pt)
  ++(3,0) circle (5pt) ;
  \path [fill,blue] (0.5,1.5)  
  ++(3,0) circle (5pt)
  ++(3,0) circle (5pt)
  ++(6,0) circle (5pt) 
  ++(3,0) circle (5pt) ;
  \path (17.5,1.5)  node[anchor=west] {$M_5=\Xi_3\,(1|4|2,3,5)=M_0+3$};

  \path [fill,blue] (0.5,1.5) ++(0,0) circle (5pt)
  ++(0,1) circle (5pt)
  ++(0,1) circle (5pt)
  ++(0,1) circle (5pt)
  ++(0,1) circle (5pt)
  ++(0,1) circle (5pt);

  \path [fill,red] (-.5,1.5) ++(0,0) circle (5pt)
  ++(0,1) circle (5pt)
  ++(0,1) circle (5pt)
  ++(0,1) circle (5pt)
  ++(0,1) circle (5pt)
  ++(0,1) circle (5pt);

  \path [fill,black] (-1.5,1.5) ++(0,0) circle (5pt)
  ++(0,1) circle (5pt)
  ++(0,1) circle (5pt)
  ++(0,1) circle (5pt)
  ++(0,1) circle (5pt)
  ++(0,1) circle (5pt);

  \draw  (-2,1) grid +(19 ,6);
  \draw[line width=2pt] (1,1) -- ++ (0,6);
  \draw[line width=2pt] (16,1) -- ++ (0,6);

  \foreach \x in {0,...,15} \draw (\x+1.5,0.5)  node {$\x$};
\end{tikzpicture}
  \caption{A Maya $5$-cycle with shift $k=3$ for the choice $(n_1,n_2,n_3,n_4)=(3,1,1,2)$ and permutation $\bpi=(41230)$. }
  \label{fig:53cyclic}
\end{figure}

We proceed to build the explicit rational solution to the $A_4$-\p\ system \eqref{eq:A4system}. In this case, the permutation $\bpi=(41230)$ on the canonical sequence $\bmu=(0,10,5,8,14)$ produces the flip sequence $\bmu_\bpi=(14,10,5,8,0)$, so that the values of the $a_i$ parameters given by \eqref{eq:mu2alpha} become 
$(a_0,a_1,a_2,a_3,a_4)=(8,10,-6,16,-34)$.
The pseudo-Wronskians corresponding to each Maya diagram in the cycle are ordinary Wronskians, which will always be the case with the normalization imposed in Remark~\ref{rem:normalization}. They read (see Figure~\ref{fig:53cyclic}):
\begin{align*}
\H_{M_0}(z)&=\Wr(H_1,H_2,H_4,H_7,H_8,H_{11}),\\
\H_{M_1}(z)&=\Wr(H_1,H_2,H_4,H_7,H_8,H_{11},H_{14}),\\
\H_{M_2}(z)&=\Wr(H_1,H_2,H_4,H_7,H_8,H_{10},H_{11},H_{14}),\\
\H_{M_3}(z)&=\Wr(H_1,H_2,H_4,H_5,H_7,H_8,H_{10},H_{11},H_{14}),\\
\H_{M_4}(z)&=\Wr(H_1,H_2,H_4,H_5,H_7,H_{10},H_{11},H_{14}),\\
\end{align*}
where $H_n=H_n(z)$ is the $n$-th Hermite polynomial. The rational solution to the dressing chain is given by the tuple $(w_0,w_1,w_2,w_3,w_4|a_0,a_1,a_2,a_3,a_4)$, where $a_i$ and $w_i$ are given by \eqref{eq:HM2w}--\eqref{eq:mu2alpha} as:
\begin{align*}
w_0(z)&=-z+\ddz\Big[\log \H_{M_1}(z) - \log \H_{M_0}(z)\Big],&& a_0=8,\\
w_1(z)&=-z+ \ddz\Big[\log \H_{M_2}(z)-\log \H_{M_1}(z)\Big],&& a_1=10,\\
w_2(z)&=-z+ \ddz\Big[\log \H_{M_3}(z) - \log \H_{M_2}(z)\Big],&& a_2=-6,\\
w_3(z)&=z+ \ddz\Big[\log \H_{M_4}(z) - \log \H_{M_3}(z)\Big],&& a_3=16,\\
w_4(z)&=-z+ \ddz\Big[\log \H_{M_0}(z) - \log \H_{M_4}(z)\Big],&& a_4=-34.
\end{align*}
Finally, Proposition~\ref{prop:wtof} implies that the corresponding rational solution to the $A_4$-\p\ system \eqref{eq:Ansystem} is given by the tuple $(f_0,f_1,f_2,f_3,f_4|\a_0,\a_1,\a_2,\a_3,\a_4)$, where
\begin{align*}
f_0(z)&=\tfrac13z+\ddz\Big[\log \H_{M_2}(\cc{2}z) - \log \H_{M_0}(\cc{2}z)\Big],&& \alpha_0=-\tfrac43,\\
f_1(z)&=\tfrac13z+\ddz\Big[\log \H_{M_3}(\cc{2}z) -\log \H_{M_1}(\cc{2}z)\Big],&& \alpha_1=-\tfrac53,\\
f_2(z)&= \ddz\Big[\log \H_{M_4}(\cc{2}z) - \log \H_{M_2}(\cc{2}z)\Big],&& \alpha_2=1,\\
f_3(z)&=\ddz\Big[\log \H_{M_0}(\cc{2}z) -\log \H_{M_3}(\cc{2}z)\Big],&& \alpha_3=-\tfrac83,\\
f_4(z)&=\tfrac13z+\ddz\Big[\log \H_{M_1}(\cc{2}z) - \log \H_{M_4}(\cc{2}z)\Big],&& \alpha_4=\tfrac{17}{3}.
\end{align*}
with $\cc{2}^2=-\tfrac16$.
\end{example}

\begin{example}
We construct a $(5,5)$-cyclic Maya diagram in the signature class $(1,1,1,1,1)$ by choosing $(n_1,n_2,n_3,n_4)=(2,3,0,1)$, which means that the first Maya diagram has $5$-block coordinates $(0|2|3|0|1)$. The canonical flip sequence is given by  $\bmu=\Theta\,(0|2|3|0|1)=(0,{\color{red}11},{\color{blue}17},{\color{brown}3},{\color{green}
    9})$. The permutation $(32410)$ gives the chain of Maya diagrams shown in Figure \ref{fig:55cyclic}.  Note that, as it happens in the previous examples, the permutation specifies the order in which the $5$-block coordinates are shifted by +1 in the subsequent steps of the cycle. This type of solutions with signature $(1,1,1,1,1)$ were already studied in \cite{filipuk2008symmetric}, and they are based on a generalization of the Okamoto polynomials that appear in the solution of \PIV ($A_2$-\p).
    
\begin{figure}[ht]
  \centering
\begin{tikzpicture}[scale=0.5]
  \path [fill,red] (0.5,6.5)  
  ++(2,0) circle (5pt)
  ++(5,0) circle (5pt);
  \path [fill,blue] (0.5,6.5)  
  ++(3,0) circle (5pt)
  ++(5,0) circle (5pt)
  ++(5,0) circle (5pt) ;
  \path [fill,green] (0.5,6.5)  
  ++(5,0) circle (5pt);
  \path (20.5,6.5)  node[anchor=west] {$M_0=\Xi_5\,(0|2|3|0|1)$};

  \path [fill,red] (0.5,5.5)  
  ++(2,0) circle (5pt)
  ++(5,0) circle (5pt);
  \path [fill,blue] (0.5,5.5)  
  ++(3,0) circle (5pt)
  ++(5,0) circle (5pt)
  ++(5,0) circle (5pt) ;
  \path [fill,brown] (0.5,5.5)  
  ++(4,0) circle (5pt);
  \path [fill,green] (0.5,5.5)  
  ++(5,0) circle (5pt);
  \path (20.5,5.5)  node[anchor=west] {$M_1=\Xi_5\,(0|2|3|1|1)$};

  \path [fill,red] (0.5,4.5)  
  ++(2,0) circle (5pt)
  ++(5,0) circle (5pt);
  \path [fill,blue] (0.5,4.5)  
  ++(3,0) circle (5pt)
  ++(5,0) circle (5pt)
  ++(5,0) circle (5pt) 
  ++(5,0) circle (5pt) ;
  \path [fill,brown] (0.5,4.5)  
  ++(4,0) circle (5pt);
  \path [fill,green] (0.5,4.5)  
  ++(5,0) circle (5pt);
  \path (20.5,4.5)  node[anchor=west] {$M_2=\Xi_5\,(0|2|4|1|1)$};

  \path [fill,red] (0.5,3.5)  
  ++(2,0) circle (5pt)
  ++(5,0) circle (5pt);
  \path [fill,blue] (0.5,3.5)  
  ++(3,0) circle (5pt)
  ++(5,0) circle (5pt)
  ++(5,0) circle (5pt) 
  ++(5,0) circle (5pt) ;
  \path [fill,brown] (0.5,3.5)  
  ++(4,0) circle (5pt);
  \path [fill,green] (0.5,3.5)  
  ++(5,0) circle (5pt)
  ++(5,0) circle (5pt);
  \path (20.5,3.5)  node[anchor=west] {$M_3=\Xi_5\,(0|2|4|1|2)$};

  \path [fill,red] (0.5,2.5)  
  ++(2,0) circle (5pt)
  ++(5,0) circle (5pt)
  ++(5,0) circle (5pt);
  \path [fill,blue] (0.5,2.5)  
  ++(3,0) circle (5pt)
  ++(5,0) circle (5pt)
  ++(5,0) circle (5pt) 
  ++(5,0) circle (5pt) ;
  \path [fill,brown] (0.5,2.5)  
  ++(4,0) circle (5pt);
  \path [fill,green] (0.5,2.5)  
  ++(5,0) circle (5pt)
  ++(5,0) circle (5pt);
  \path (20.5,2.5)  node[anchor=west] {$M_4=\Xi_5\,(0|3|4|1|2)$};

  \path [fill,black] (0.5,1.5)  
  ++(1,0) circle (5pt);  
  \path [fill,red] (0.5,1.5)  
  ++(2,0) circle (5pt)
  ++(5,0) circle (5pt)
  ++(5,0) circle (5pt);
  \path [fill,blue] (0.5,1.5)  
  ++(3,0) circle (5pt)
  ++(5,0) circle (5pt)
  ++(5,0) circle (5pt) 
  ++(5,0) circle (5pt) ;
  \path [fill,brown] (0.5,1.5)  
  ++(4,0) circle (5pt);
  \path [fill,green] (0.5,1.5)  
  ++(5,0) circle (5pt)
  ++(5,0) circle (5pt);
  \path (20.5,1.5)  node[anchor=west] {$M_5=\Xi_5(1|3|4|1|2)=M_0+5$};

  \path [fill,green] (0.5,1.5) ++(0,0) circle (5pt)
  ++(0,1) circle (5pt)
  ++(0,1) circle (5pt)
  ++(0,1) circle (5pt)
  ++(0,1) circle (5pt)
  ++(0,1) circle (5pt);
  \path [fill,brown] (-0.5,1.5) ++(0,0) circle (5pt)
  ++(0,1) circle (5pt)
  ++(0,1) circle (5pt)
  ++(0,1) circle (5pt)
  ++(0,1) circle (5pt)
  ++(0,1) circle (5pt);
  \path [fill,blue] (-1.5,1.5) ++(0,0) circle (5pt)
  ++(0,1) circle (5pt)
  ++(0,1) circle (5pt)
  ++(0,1) circle (5pt)
  ++(0,1) circle (5pt)
  ++(0,1) circle (5pt);

  \path [fill,red] (-2.5,1.5) ++(0,0) circle (5pt)
  ++(0,1) circle (5pt)
  ++(0,1) circle (5pt)
  ++(0,1) circle (5pt)
  ++(0,1) circle (5pt)
  ++(0,1) circle (5pt);

  \path [fill,black] (-3.5,1.5) ++(0,0) circle (5pt)
  ++(0,1) circle (5pt)
  ++(0,1) circle (5pt)
  ++(0,1) circle (5pt)
  ++(0,1) circle (5pt)
  ++(0,1) circle (5pt);

  \draw  (-4,1) grid +(24 ,6);
  \draw[line width=2pt] (1,1) -- ++ (0,6);
  \draw[line width=2pt] (19,1) -- ++ (0,6);

  \foreach \x in {0,...,18} \draw (\x+1.5,0.5)  node {$\x$};
\end{tikzpicture}
  \caption{A Maya $5$-cycle with shift $k=5$ for the choice $(n_1,n_2,n_3,n_4)=(2,3,0,1)$ and permutation $\bpi=(32410)$.}
  \label{fig:55cyclic}
\end{figure}

We proceed to build the explicit rational solution to the $A_4$-\p\ system \eqref{eq:A4system}. In this case, the permutation $\bpi=(32410)$ on the canonical sequence $\bmu=(0,11,17,3,9)$ produces the flip sequence $\bmu_\bpi=(3,17,9,11,0)$, so that the values of the $a_i$ parameters given by \eqref{eq:mu2alpha} become 
$(a_0,a_1,a_2,a_3,a_4)=(-28,16,-4,22,-16)$.
The pseudo-Wronskians corresponding to each Maya diagram in the cycle are ordinary Wronskians, which will always be the case with the normalization imposed in Remark~\ref{rem:normalization}. They read:
\begin{align*}
\H_{M_0}(z)&=\Wr(H_1,H_2,H_4,H_6,H_7,H_{12}),\\
\H_{M_1}(z)&=\Wr(H_1,H_2,H_3,H_4,H_6,H_7,H_{12}),\\
\H_{M_2}(z)&=\Wr(H_1,H_2,H_3,H_4,H_6,H_7,H_{12},H_{17}),\\
\H_{M_3}(z)&=\Wr(H_1,H_2,H_3,H_4,H_6,H_7,H_9,H_{12},H_{17}),\\
\H_{M_4}(z)&=\Wr(H_1,H_2,H_3,H_4,H_6,H_7,H_9,H_{11},H_{12},H_{17}),\\
\end{align*}
where $H_n=H_n(z)$ is the $n$-th Hermite polynomial. The rational solution to the dressing chain is given by the tuple $(w_0,w_1,w_2,w_3,w_4|a_0,a_1,a_2,a_3,a_4)$, where $a_i$ and $w_i$ are given by \eqref{eq:HM2w}--\eqref{eq:mu2alpha} as:
\begin{align*}
w_0(z)&=-z+\ddz\Big[\log \H_{M_1}(z) - \log \H_{M_0}(z)\Big],&& a_0=-28\\
w_1(z)&=-z+ \ddz\Big[\log \H_{M_2}(z) - \log \H_{M_1}(z)\Big],&& a_1=16,\\
w_2(z)&=-z+ \ddz\Big[\log \H_{M_3}(z) - \log \H_{M_2}(z)\Big],&& a_2=-4,\\
w_3(z)&=-z+ \ddz\Big[\log \H_{M_4}(z) - \log \H_{M_3}(z)\Big],&& a_3=22\\
w_4(z)&=-z+ \ddz\Big[\log \H_{M_0}(z) - \log \H_{M_4}(z)\Big],&& a_4=-16.
\end{align*}
Finally, Proposition~\ref{prop:Mwcorrespondence} implies that the corresponding rational solution to the $A_4$-\p\ system \eqref{eq:Ansystem} is given by the tuple $(f_0,f_1,f_2,f_3,f_4|\a_0,\a_1,\a_2,\a_3,\a_4)$, where
\begin{align*}
f_0(z)&=\tfrac15z+\ddz\Big[\log \H_{M_2}(\cc{3}z) - \log \H_{M_0}(\cc{3}z)\Big],&& \alpha_0=\tfrac{14}{5},\\
f_1(z)&=\tfrac15z+\ddz\Big[\log \H_{M_3}(\cc{3}z) -\log \H_{M_1}(\cc{3}z)\Big],&& \alpha_1=-\tfrac85,\\
f_2(z)&=\tfrac15z+ \ddz\Big[\log \H_{M_4}(\cc{3}z)-\log \H_{M_2}(\cc{3}z)\Big],&& \alpha_2=\tfrac25,\\
f_3(z)&=\tfrac15z+\ddz\Big[\log \H_{M_0}(\cc{3}z) - \log \H_{M_3}(\cc{3}z)\Big],&& \alpha_3=-\tfrac{11}5,\\
f_4(z)&=\tfrac15z+\ddz\Big[\log \H_{M_1}(\cc{3}z) - \log \H_{M_4}(\cc{3}z)\Big],&& \alpha_4=\tfrac85.
\end{align*}
with $\cc{3}^2=-\tfrac1{10}$.

\end{example}

\subsection{Zeros of the special polynomials in the $A_4$ rational solutions}

The zeros of Okamoto and generalized Hermite polynomials that appear in the rational solutions to $\Pfour$ are known to form very regular patterns in the complex plane, \cite{clarkson2003fourth}. In this section we show the equivalent patterns for their $A_4$ counterparts, which are also very regular but show a richer structure.

Following the notation above, we label a Maya diagram in standard form by specifying its signature as a superscript, and 4 non-negative integers $(n_1,n_2,n_3,n_4)$ that  determine the $k$-block coordinates as specified by Theorem~\ref{thm:A4}. More specifically, we can write the sequence of positive integers that belong to $M$ in the following manner:

\begin{eqnarray*}
 M_+^{(5)}(n_1,n_2,n_3,n_4)&=& \Big\{ n_1+j  \Big\}_{j=0}^{n_2-1}\cup \Big\{ n_1+n_2+n_3+j  \Big\}_{j=0}^{n_4-1}\\
 M_+^{(3,1,1)}(n_1,n_2,n_3,n_4)&=&\Big\{ 3(n_1+j)  \Big\}_{j=0}^{n_2-1}\cup \Big\{ 1+3j  \Big\}_{j=0}^{n_3-1} \cup \Big\{ 2+3j  \Big\}_{j=0}^{n_4-1}\\
  M_+^{(1,1,1,1,1)}(n_1,n_2,n_3,n_4)&=&\Big\{ 1+5j  \Big\}_{j=0}^{n_1-1}\cup \Big\{ 2+5j  \Big\}_{j=0}^{n_2-1}\cup \Big\{ 3+5j  \Big\}_{j=0}^{n_3-1} \cup \Big\{ 4+5j  \Big\}_{j=0}^{n_4-1}
 \end{eqnarray*}
 
 Likewise, we denote by $H^{(s)}(n_1,n_2,n_3,n_4)$ the corresponding Hermite Wronskians for each signature $s$, i.e.
 \begin{eqnarray*}
   H^{(5)}_{n_1,n_2,n_3,n_4}(z)   &=& \Wr\left[ \Big\{ H_{n_1+j} (z) \Big\}_{j=0}^{n_2-1}, \Big\{ H_{n_1+n_2+n_3+j}(z)  \Big\}_{j=0}^{n_4-1}\right], \\
 H^{(3,1,1)}_{n_1,n_2,n_3,n_4}(z)&=& \Wr\left[ \Big\{H_{ 3(n_1+j)} (z) \Big\}_{j=0}^{n_2-1}, \Big\{H_{ 1+3j}(z)  \Big\}_{j=0}^{n_3-1} , \Big\{ H_{2+3j}(z)  \Big\}_{j=0}^{n_4-1}\right],\\
  H^{(1,1,1,1,1)}_{n_1,n_2,n_3,n_4}(z)&=& \Wr\left[ \Big\{ H_{1+5j} (z) \Big\}_{j=0}^{n_1-1}, \Big\{ H_{2+5j} (z) \Big\}_{j=0}^{n_2-1}, \Big\{ H_{3+5j} (z) \Big\}_{j=0}^{n_3-1} , \Big\{ H_{4+5j}(z)  \Big\}_{j=0}^{n_4-1}\right].
 \end{eqnarray*}

\begin{figure}[ht] \caption{Zeros of  Hermite Wronskians  $H^{(5)}_{n_1,n_2,n_3,n_4}(z)$ for different values of  $(n_1,n_2,n_3,n_4)$.}\label{fig:1}
 \[\begin{array}{c@{\quad}c@{\quad}c}
\Anfig{2}{2H_12-12-16-16} & \Anfig{2}{2H_12-16-12-16} & \Anfig{2}{2H_12-16-16-12} \\ 
H^{(5)}_{12,12,16,16} & H^{(5)}_{12,16,12,16} & H^{(5)}_{12,16,16,12}
\end{array}\]
\end{figure}

\begin{figure}[ht] \caption{Zeros of  Hermite Wronskians  $H^{(3,1,1)}_{n_1,n_2,n_3,n_4}(z)$ for different values of  $(n_1,n_2,n_3,n_4)$.}\label{fig:2}
 \[\begin{array}{c@{\quad}c@{\quad}c}
\Anfig{2}{Ok3_8-16-3-5} & \Anfig{2}{Ok3_8-16-6-5} & \Anfig{2}{Ok3_8-16-9-5} \\ 
H^{(3,1,1)}_{1,5,8,16} & H^{(3,1,1)}_{2,5,8,16} & H^{(3,1,1)}_{3,5,8,16}\\[10pt]
\Anfig{2}{Ok3_8-16-12-5} & \Anfig{2}{Ok3_8-16-15-5} & \Anfig{2}{Ok3_8-16-18-5} \\ 
H^{(3,1,1)}_{4,5,8,16} & H^{(3,1,1)}_{5,5,8,16} & H^{(3,1,1)}_{6,5,8,16}
\end{array}\]
\end{figure}

\begin{figure}[ht] \caption{Zeros of  Hermite Wronskians  $H^{(1,1,1,1,1)}_{n_1,n_2,n_3,n_4}(z)$ for different values of  $(n_1,n_2,n_3,n_4)$.}\label{fig:3}

\[\begin{array}{c@{\quad}c@{\quad}c}
\Anfig{2}{Ok5_1-9-17-18}  & \Anfig{2}{Ok5_2-9-16-18} & \Anfig{2}{Ok5_3-9-15-18}  \\ 
H^{(1,1,1,1,1)}_{1,9,17,18} &H^{(1,1,1,1,1)}_{2,9,16,18} &H^{(1,1,1,1,1)}_{3,9,15,18} \\[10pt]
\Anfig{2}{Ok5_4-9-14-18} & \Anfig{2}{Ok5_5-9-13-18}  & \Anfig{2}{Ok5_6-9-12-18}  \\
H^{(1,1,1,1,1)}_{4,9,14,18} &H^{(1,1,1,1,1)}_{5,9,13,18} & H^{(1,1,1,1,1)}_{6,9,12,18} 
\end{array}\]
\end{figure}

From the observation of these plots it is clear that the geometric distribution of the zeros on the complex plane follows some regular patterns that call for an explanation. In some cases, specially in Figure~\ref{fig:1} we observe two overlapping patterns that seems to suggest an approximate factorization of $H^{(5)}_{n_1,n_2,n_3,n_4}(z)$ into the product of two generalized Hermite Wronskians $H^{(3)}_{m_1,m_2}(z)$ and  $H^{(3)}_{m_3,m_4}(z)$. More generally, the approximate correspondence between the Young diagrams of the partitions that determine the sequence of Hermite polynomials in the Wronskian, and the position of the zeros in the complex plane was observed in \cite{FelderHemeryVeselov}. A detailed study of the zeros of these Hermite Wronskians is currently under investigation and we shall not pursue it further here.

We stress that all of these zeros are conjectured by Veselov to be simple \cite{FelderHemeryVeselov}, except the zero at the origin whose multiplicity is a triangular number, \cite{bonneux2019coefficients}. The number of zeros in the real line, and thus the real poles of the rational solutions can be calculated as a function of $(n_1,n_2,n_3,n_4)$ for each signature class by applying the formulas derived in \cite{garcia2015oscillation}. 
\section{Acknowledgements}

The research of DGU has been supported in part by Spanish MINECO-FEDER
Grants MTM2015-65888-C4-3 and PGC2018-096504-B-C33, and by the ICMAT-Severo Ochoa project
SEV-2015-0554. The research of RM was supported in part by NSERC grant
RGPIN-228057-2009. DGU would like to thank the University of Kent for their hospitality during his research stay in Spring 2017 where part of these results were obtained, and the Spanish MINECO program Salvador de Madariaga that provided the necessary financial support. Interesting discussions with Andy Hone, Claire Dunning, Kerstin Jordaan and Galina Filipuk are also gratefully acknowledged. Galina pointed us to the fact that a Wronskian representation for the $k=3$ shift rational solutions in the $A_4$-\p\ case was not known, and this observation  triggered a good amount of the research done in this paper.
PAC would like to thank the Isaac Newton Institute for Mathematical
Sciences for support and hospitality during the programme ``Complex analysis: techniques, applications
and computations'' when some of the work on this paper was undertaken. This work was supported
by  EPSRC grant number EP/R014604/1.

\bibliography{painleve4}
\bibliographystyle{amsplain}

\end{document}